\documentclass[12pt]{article}
\usepackage{amsmath,amssymb,amsfonts,amsthm,bm,slashed,graphics,graphicx}
\usepackage{epsfig,hyperref,subfigure,dsfont,multirow,psfrag,extarrows}
\usepackage{tikz}
\usepackage{mathtools}
\usepackage{pgfplots}

\DeclareMathOperator{\Tr}{Tr}

\newtheorem{lemma}{Lemma}
\newtheorem{remark}{Remark}
\newtheorem{definition}{Definition}
\newtheorem{theorem}{Theorem}

\newcommand{\bea}{\begin{eqnarray}}
\newcommand{\ea}{\end{eqnarray}}
\newcommand{\eea}{\end{eqnarray}}
\newcommand{\beq}{\begin{equation}}
\newcommand{\eeq}{\end{equation}}
\newcommand{\be}{\begin{equation}}
\newcommand{\ee}{\end{equation}}

\def\R{{\mathbb R}}

\newcommand{\cG}{{\cal{G}}}

\newcommand{\cF}{{\cal{F}}}
\newcommand{\cB}{{\cal B}}

\newcommand{\cJ}{{\cal J}}

\newcommand{\cD}{{\cal D}}

 \newcommand{\cW}{{\cal W}}
 
 \newcommand{\mJ}{{\mathbb J}}

\newcommand{\fm}{{\mathcal K}}

\newcommand{\cV}{{\cal{V}}}

\newcommand{\cS}{{\cal{S}}}
\newcommand{\bee}{\begin{equation}}

\newcommand{\ben}{\begin{eqnarray}}
\newcommand{\en}{\end{eqnarray}}

\newcommand{\prf}{{\noindent \bf Proof\; \; }}

\pgfplotsset{compat=1.16}
\begin{document}
\title{Multiscale Loop Vertex Expansion\\ for Cumulants, 
the $\phi^4_2$ Model}
\author{V. Rivasseau\\ Universit\'e Paris-Saclay, CNRS/IN2P3\\ IJCLab, 91405 Orsay, France}

\date{} 

\maketitle

\begin{abstract}  

We consider the $\phi^4_2$ model. It has been called the simplest non-trivial super-renormalisable model.
The method we use is the multiscale loop vertex expansion, an improvement of constructive field theory.  
We prove analyticity and Borel summability of the cumulants up to a \emph{finite order}.
\end{abstract}

\noindent\textbf{keywords}
Cumulants; Constructive Field Theory

\medskip\noindent
{Mathematics Subject Classification}
81T08

\medskip\noindent
All data are available within the article.

\section{Introduction}

For a general exposition of constructive field theory, see \cite{Sim,GJ,Riv}.
The loop vertex expansion (LVE) was introduced in 2007  as a new tool in constructive field theory \cite{LVE,MagRiv}. 
The essential ingredients of LVE are the Hubbard-Stratonovich intermediate field representation \cite{Hub,Str}, the replica method \cite{MPV} and the BKAR formula \cite{BK,AR1}.

\medskip
A main feature of the LVE is that it is written in terms of trees which are exponentially bounded.
It means that the outcome of the LVE is convergent whereas the usual perturbative expansion \emph{diverges}.
For a review of the LVE, we suggest \cite{GRS}; for the actual mechanism of replacing 
Feynman graphs, which are not exponentially bounded, by trees, see \cite{RiZh1}.
For the LVE applied to cumulants when renormalisation is {\it absent}, see  \cite{GuKra,Riv1}.
\medskip
In order to deal with quantum fields when renormalisation is {\it present}, 
we aimed at combining the LVE with the multiscale approach 
that is synthesized in the book \cite{Riv}. 
 
\medskip
We have already performed the initial steps on this road. In \cite{MLVE} we presented a simple combinatorial model with renormalization. 
It is a model of conjugate vector fields with a quartic interaction and a particular propagator.
The main result of \cite{MLVE} is that this divergence is renormalized by using a Wick-ordered interaction. 
Then this multiscale LVE (MLVE) has been successfully applied to various general super-renormalisable fields of increasing complexity \cite{RiZh2,DR,DR1,RV}.
But the initial articles \cite{MLVE,RiZh2,DR,DR1,RV} are restricted to the partition function and its logarithm (the free energy) and do not include cumulants (connected Schwinger functions in the terminology of field theorists). The article \cite{Zhao} is more precise in the 
combinatorics of the various constants, but it is limited to the model $\phi^4_{d=1}$.

\medskip
We choose the $\phi^4_2$ model in the formalism of \cite{RiZh2} as an initial benchmark for studying cumulants in quantum field theory because it is the initial constructive model when renormalisation is present,
and the main problem of this model compared to the model of \cite{MLVE} is that the action is not positive. Remark also that, in this article, our cumulants depend solely on $J$, not on $J$ and $\bar J$, the source fields of \cite{GuKra}. 

\medskip
This paper is devoted to recovering classical results \cite{nelson,GJ1, EMS} in the LVE representation with finite cumulants.
Also, our view is different. It is not so much the constructive aspect and the LVE that we are after.
Our cumulants cannot be defined constructively but only perturbatively.
Therefore, we are satisfied with a theorem about Borel summability for cumulants \emph{up to a finite order}.

\section{The Model}
\label{sec-model}

\subsection{Formalism of \texorpdfstring{\cite{RiZh2}}{RiZh2}}

\begin{itemize}
\item We adopt the convention $\sum_{\varnothing}=0, \: \prod_{\varnothing}=1$.

\item Our convention for a norm is the $\mathbf{Euclidean}$ norm, i.e. if the real vector function $\mathbf f = (f_1 \cdots f_n)$ 
depends on {\it finite} variables $(a_1, \cdots, a_n)$, the norm in question  
can be expressed as the square root of the inner product of the associated vector and itself: 
\bea \label{eq0.5.0}
\Vert  \mathbf f \Vert = \sqrt{f_1^2 + \cdots +f_n^2}\;.
\ea

\end{itemize}

We choose our model in the formalism of \cite{RiZh2}, i.e.
we consider the Bosonic $\phi^4_2$ theory in a fixed volume, namely the unit square $[0,1]^2$, and with coupling constant $\lambda$. From now on, to reflect the fact that we are in dimension $d=2$, we note $x= (x_1,x_2)$. Any spatial 
integral has to be understood as restricted to $[0,1]^2$ and we use the notation $\Tr$ to mean 
\bee\int_{[0,1]^2} d^2x =\int_{[0,1]} dx_1\int_{[0,1]} dx_2, \quad x= (x_1,x_2).
\ee
The formal partition function of the theory with source $J(x)$ is
\begin{equation}\label{model}
Z(J,\lambda)=\int d\mu_{C_0} ( \phi ) e^{\Tr J(x) \phi (x)  -\frac{\lambda}{2}\Tr\; \phi^4(x)}
\end{equation}
where $d\mu_C$ is the normalized Gaussian measure with covariance or propagator 
\begin{equation}
C_0(x, y)=\frac{1}{4\pi}\int_0^\infty\frac{d\alpha}{\alpha}e^{-\alpha m^2-\frac{(x-y)^2}{4\alpha}},
\end{equation}
with $x$ and $y$ restricted to $[0,1]^2$, hence with free boundary conditions\footnote{The theory could equally well be considered
on the two-dimensional torus with periodic boundary conditions without significant change in our analysis.}.

A main problem in quantum field theory is to compute the generating function 
of the connected Schwinger functions.

The covariance at coinciding points $ C_0(x, x) = T$ corresponds to a self-loop or
tadpole in perturbation theory. It diverges logarithmically in the ultraviolet cutoff and
is the only primitive ultraviolet divergence of the theory. Renormalization reduces to Wick ordering; hence
the  renormalized model has partition function:
\bea\label{e5}
Z(J,\lambda)&=&\int d\mu_{C_0} ( \phi ) e^{\Tr J(x) \phi (x) -\frac{\lambda}{2}\Tr\; :\phi^4(x):} \\
&=& \int d\mu_{C_0} e^{\Tr J(x) \phi (x) -\frac{\lambda}{2}\Tr\;   [(\phi^2-3T)^2  -6 T^2]}, \label{e6}
\eea
where the Wick ordering in $:\phi^4(x): \equiv \phi^4-6T \phi^2 + 3T^2$ is taken with respect to $C_0$.

These expressions are formal, and to define the theory one needs to introduce an ultraviolet cutoff.
This is most conveniently done in a multi-scale representation \cite{Riv} which
slices the propagator in the parametric representation, then keeps only a finite number of slices and defines the ultraviolet cutoff as a maximal slice index $j_{max}$. The ultraviolet limit corresponds to $j_{max} \to \infty$. 
We are not interested in the infrared behavior of this  $\phi^4_2$ theory,
that is why we define\footnote{Beware we choose the convention of \emph{lower} indices for slices, as in \cite{MLVE}, not upper indices as in \cite{Riv}.}
\begin{definition}
 Let us define $M\in \R_+$, $M > 1$ and  
\bea
C  &:=& C_0-C_0^0 :=  C_0-\frac{1}{4\pi}\int_1^\infty\frac{d\alpha}{\alpha}e^{-\alpha m^2-\frac{(x-y)^2}{4\alpha}},   
\\ C_j &:=& \frac{1}{4\pi}\int_{M^{-2i}}^{M^{-2(i-1)}}\frac{d\alpha}{\alpha}e^{-\alpha m^2-\frac{(x-y)^2}{4\alpha}},
\quad C_{\le j_{max}}  :=\sum_{j=1}^{j_{max}} C_j \, .
\ea
\end{definition}

\subsection{Slices}

We put $m=0$ in the rest of this article since we are only interested in the ultraviolet behavior of this  $\phi^4_2$ theory.

\begin{lemma}
\label{Lem1}
$M$ can be chosen  so that 
\bea
C_j&\le& K e^{-c M^j|x-y|} \;= \;  e^{-\frac{e^{-2\pi}}{2} M^j|x-y|} .\label{multbound}
\ea
\end{lemma}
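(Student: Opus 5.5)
We bound the slice integral directly. With $m=0$,
\[
C_j(x,y)=\frac{1}{4\pi}\int_{M^{-2j}}^{M^{-2(j-1)}}\frac{d\alpha}{\alpha}\,e^{-\frac{|x-y|^2}{4\alpha}}
\]
(we read the index $i$ in the definition as $j$). Two elementary facts do all the work. First, the measure $d\alpha/\alpha$ gives the slice total mass $\log M^2=2\log M$. Second, on the slice $\alpha\le M^{-2(j-1)}$, so the Gaussian factor is at most $e^{-M^{2(j-1)}|x-y|^2/4}$. This gives
\[
C_j\le \frac{\log M}{2\pi}\,e^{-\frac{M^{2j-2}}{4}|x-y|^2}.
\]
The prefactor $\frac{\log M}{2\pi}$ is the constant $K$ of the statement.

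Next we convert the Gaussian decay into exponential decay at scale $M^{-j}$. Write $u=M^j|x-y|$, so the exponent is $-\frac{u^2}{4M^2}$. For any $c>0$ we have $\frac{u^2}{4M^2}\ge c u - c^2M^2$, by completing the square: $(u/(2M)-cM)^2\ge0$. Hence
\[
C_j\le K e^{c^2M^2}e^{-cM^j|x-y|}.
\]
If only a bound of the form $K'e^{-cM^j|x-y|}$ is needed, we are done, with $K'=Ke^{c^2M^2}$.

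The real content, and the main obstacle, is the second equality in (\ref{multbound}). It claims one can take the prefactor equal to $1$ with the specific rate $c=\frac{e^{-2\pi}}{2}$, and this is where "$M$ can be chosen" is used. We fix $M=e^{2\pi}$, so that $\frac{\log M}{2\pi}=1$, i.e. $K=1$. We then take $c=\frac{1}{2M}=\frac{e^{-2\pi}}{2}$, which makes $c^2M^2=\frac14$ in the completing-the-square step. That still leaves an overall constant $e^{1/4}$, so completing the square alone does not give prefactor exactly $1$.

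To remove this factor we split into two regimes in $u$.
\begin{itemize}
\item For $u\ge 2$, the inequality $\frac{u^2}{4M^2}\ge\frac{u}{2M}$ holds outright, since it is equivalent to $u\ge 2M\cdot\frac{2}{4}\cdot$\dots; concretely $\frac{u^2}{4M^2}\ge\frac{u}{2M}\iff u\ge 2M$. So the clean regime is really $u\ge 2M$, not $u\ge2$.
\item For small $u$, we need the factor $e^{-\frac{u^2}{4M^2}}$ together with the slice integral's own smallness to beat $e^{-u/(2M)}$.
\end{itemize}
The small-$u$ regime is not automatic: at $u=0$ the slice integral equals $\frac{\log M}{2\pi}=1$ exactly, so there is no slack at the origin.

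I do not see how to obtain the inequality with prefactor exactly $1$ from these estimates. If no sharper bound on the slice integral for $u<2M$ turns up, I would prove the result with a harmless constant instead, namely $C_j\le e^{1/4}e^{-\frac{e^{-2\pi}}{2}M^j|x-y|}$ with $M=e^{2\pi}$. Any fixed constant is irrelevant for the later multiscale estimates.
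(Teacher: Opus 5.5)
Your first step is the paper's own: bound $e^{-|x-y|^2/4\alpha}$ by its value at the top of the slice, bound the $d\alpha/\alpha$ mass by $2\log M$, and take $M=e^{2\pi}$ so that $\frac{\log M}{2\pi}=1$. The paper then finishes in one line, asserting that ``if $z^2$ is positive, $e^{-z^2}\le e^{-z}$'' with $z=\frac{M^j|x-y|}{2M}$. That inequality holds only for $z\ge 1$ (or $z=0$), which is exactly your regime $u=M^j|x-y|\ge 2M$. For $0<u<2M$ the paper's step is false. So the obstruction you found is a genuine error in the paper's proof, not a weakness of your estimates. Your completing-the-square argument also proves the first part $C_j\le K' e^{-cM^j|x-y|}$ directly, where the paper only cites \cite{Riv}.

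No sharper estimate on the small-$u$ regime will rescue prefactor $1$, because the claimed inequality is false. Both sides equal $1$ at $x=y$, but $C_j$ depends on $|x-y|$ only through $|x-y|^2$. Using $e^{-s}\ge 1-s$ inside the slice integral gives
\[
C_j(x,y)\;\ge\; 1-\frac{|x-y|^2}{16\pi}\bigl(M^{2j}-M^{2j-2}\bigr)\;\ge\; 1-\frac{u^2}{16\pi},
\]
whereas $e^{-cu}\le 1-cu+\tfrac{c^2u^2}{2}$. Hence $C_j>e^{-cu}$ whenever $0<u<c/\bigl(\tfrac{1}{16\pi}+\tfrac{c^2}{2}\bigr)\approx 0.047$ for $c=\frac{e^{-2\pi}}{2}$. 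Such pairs $x\neq y$ exist in $[0,1]^2$ for every $j$. The bound $|x-y|\le\sqrt2$ that the paper invokes is irrelevant, since the failure occurs at short distance.

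So stop searching for the missing estimate and state the lemma with a constant: $C_j\le e^{1/4}\,e^{-\frac{e^{-2\pi}}{2}M^j|x-y|}$ for $M=e^{2\pi}$. This is exactly your argument, and $e^{1/4}$ is optimal along that route, since $\max_u\bigl(\frac{u}{2M}-\frac{u^2}{4M^2}\bigr)=\frac14$. Alternatively, use any $K>1$ with a suitably adjusted $c$; as you say, such a constant is harmless for the multiscale bounds. Also clean up your first bullet, whose correct content is simply $\frac{u^2}{4M^2}\ge\frac{u}{2M}\iff u\ge 2M$.
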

\prf
The first part of this lemma is proved in Lemma II.1.1. of \cite{Riv}. For the second part,
by the fundamental upper bound of an integrable function
\bea
\frac{1}{4\pi}\int_{M^{-2i}}^{M^{-2(i-1)}}\frac{d\alpha}{\alpha}e^{-\frac{(x-y)^2}{4\alpha}}
&\le& \frac{\log M}{2\pi} \, e^{- \frac14 M^{2(i-1)} (x-y)^2};
\ea
Since $x$ and $y$ are restricted to $[0,1]^2$,  hence $(x-y)^2\le 2$, $|x-y| \le \sqrt2$.
If $z^2$ is positive, $e^{-z^2} \le e^{-z}$, hence
\bea
 e^{- \frac14 M^{2(i-1)} (x-y)^2} &\le& e^{-\frac{M^j |x-y| }{2M}}.
\ea
We arrive at
\bea
\frac{1}{4\pi}\int_{M^{-2i}}^{M^{-2(i-1)}}\frac{d\alpha}{\alpha}e^{-\frac{(x-y)^2}{4\alpha}}
&\le&  \frac{\log M}{2\pi} \, e^{-\frac{M^j |x-y| }{2M}} = \, e^{-
\frac{e^{-2\pi}}{2} 
M^j|x-y|} 
\ea
if $M= e^{2\pi}$. \qed

\medskip
We put $M=e^{2\pi}$ in the rest of this article. 
Therefore, taking into account Lemma 1, 
\bea\label{e10}
T_j :=C_j (x,x), \quad T_{\le j} &:=& \sum_{k=1}^{j}  T_k , \quad
T_{\le j} = j , \quad T_{\le j_{max}} = j_{max} .
\ea

The partition function $Z(\lambda,j_{\max} ,J)$ obtained by substituting $C_{\le j_{max}}$ instead of $C_0$ in \eqref{model} and $T_{\le j_{max}}$
instead of $T$ in \eqref{e6} is now well defined:
\begin{definition}\label{Generating function of the moments}
The renormalized partition function {\it with sources} $J(x)$ is
\bea
Z(\lambda,j_{\max} ,J)&=&   \int d\nu(\sigma) e^{- \Tr\big\{ \log_2 
[1+2i \sqrt{\lambda} C_{\le j_{max}}\sigma ]\big\}}\nonumber \\&&  e^{- \Tr \big\{3\lambda  T_{\le j_{max}}^2  + 2i \sqrt{\lambda} T_{\le j_{max}}\big\}}
e^{- <J(y) R(\sigma) J(x)> }\label{eqp1}
\ea
with $R(\sigma)$ being defined by 
\bee \label{eq272} R(\sigma) \equiv [1+2i \sqrt{\lambda} C_{\le j_{max}}\sigma ]^{-1}.
\ee
This definition of $Z(\lambda,j_{\max} ,J)$ holds only as a \emph{power series in $\lambda$}.
\end{definition}
Now let us define \emph{cumulants}, similar to that one in \cite{GuKra}, Definition 2, but taking into account the symmetry.
The cumulants of $\phi_2^4$ of odd order vanish \cite{RiZh2}; hence we need to worry only about the cumulants of even order.
The cumulants of even order are, by a formula combining \cite{GuKra} and \cite{RiZh2}\footnote{Our notation, in all that is concerned with the cumulants, is in line with \cite{Riv1}. }:
\begin{definition}\label{cum0}
\bea\label{cum1}
\mathfrak{K}^{2\fm}_{j_{\max}}(\lambda, x_1, ..., x_{2\fm})\!
&:=&\!\Big[{\frac{\partial^{2\fm}}{\partial J(x_1)\cdots\partial J(x_{2\fm})}}\log{ Z(\lambda,j_{\max},J)} \Big]_{\{J\}=0}.
\ea
\end{definition}

\subsection{Single Slice Intermediate Field Representation}
\label{SinSlice}
In \cite{MagRiv} $d\nu$ is defined to be the ultra-local measure on $\sigma$ with covariance $\delta(x-y)$.
Then the following Theorems are stated and proved in \cite{MagRiv} : 
\begin{theorem}
\begin{eqnarray}\label{treeformul}
\lim_{\Lambda \to {\mathbb R}^4}\frac {\log Z(\Lambda)} {\vert \Lambda\vert } 
&=& \sum_{n=1}^{\infty} \frac{1}{n!}\sum_\cJ        \bigg\{ \prod_{\ell\in \cJ}  
\big[ \int_0^1 dw_\ell   \big]\bigg\} G_T(\sigma, x_{\ell_0})\vert_{x_{\ell_0} =0}  \\
G_\cJ(\sigma, x_{\ell_0})&=&\prod_{\ell\in \cJ}   \int d^4 x_\ell d^4 y_\ell 
\int  d\nu_\cJ (\{\sigma_v\}, \{ w \})  \nonumber \\
\hskip-1cm && \bigg\{ \prod_{\ell\in \cJ} \big[ \delta (x_\ell - y_\ell) 
 \frac{\delta}{\delta \sigma_{v(\ell)}(x_\ell)}\frac{\delta}{\delta \sigma_{v'(\ell)}(y_\ell)} 
\big] \bigg\} \prod_v V_v , \label{gt}
\end{eqnarray}
where 
\begin{itemize}

\item each line $\ell$ of the tree joins two different vertices $V_{v(\ell)}$ and $V_{v'(\ell)}$ 
at point $x_{\ell}$ and $y_{\ell}$, which are identified through the function
$\delta (x_\ell - y_\ell) $ (since the covariance of $\sigma$ is ultra-local),

\item the sum over $\cJ$ is over rooted trees over $n$ vertices, which have therefore $n-1$ lines, with root $\ell_0$,\footnote{The notation $T$ is replaced by $\cJ$ to be in line of \cite{RiZh2}.} 

\item the normalized Gaussian measure $d\nu_\cJ (\{\sigma_v\}, \{ w \})  $ over the vector field $\sigma_v$ has covariance
$$<\sigma_v,\sigma_{v'}>=
\delta (x-y) w^\cJ (v, v', \{ w\})$$ where $w^\cJ (v, v', \{ w\})$ is 1 if $v=v'$,
and the infimum of the $w_\ell$ for $\ell$ running over the unique path from $v$ to $v'$ in $\cJ$
if $v\ne v'$. This measure is well-defined because the matrix $w^\cJ$ is positive.

\end{itemize}
\end{theorem}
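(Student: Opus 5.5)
The plan is to derive the tree formula from three ingredients: the intermediate field representation of the $\phi^4_4$ interaction in a finite volume $\Lambda$, the replica trick, and the BKAR forest formula. Taking the logarithm then selects the connected part, which is a sum over trees, and the thermodynamic limit follows from translation invariance.

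First I will integrate out $\phi$ after a Hubbard--Stratonovich transformation. Writing $e^{-\frac{\lambda}{2}\phi^4}=\int d\nu(\sigma)\,e^{i\sqrt{\lambda}\sigma\phi^2}$ with $d\nu$ ultra-local of covariance $\delta(x-y)$ gives
\begin{equation*}
Z(\Lambda)=\int d\nu(\sigma)\,e^{-\frac12\Tr\log(1-2i\sqrt{\lambda}C\sigma)} .
\end{equation*}
I will expand the exponential, $Z(\Lambda)=\sum_n\frac{1}{n!}\int d\nu(\sigma)\prod_{v=1}^nV_v$ with $V_v=-\frac12\Tr\log(1-2i\sqrt\lambda C\sigma)$ (or the renormalized version of it). Next comes the replica trick: each vertex gets its own copy $\sigma_v$, and the degenerate covariance is $\langle\sigma_v,\sigma_{v'}\rangle=\delta(x-y)$ for all $v,v'$, which reproduces the original integral.

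Second, I will apply the BKAR formula to the coupling matrix $x_{vv'}=1$ ($v\neq v'$). Interpolating with parameters $w_\ell$ and differentiating the Gaussian integral in the covariance turns each derivative with respect to $w_\ell$ into the line operator $\int d^4x_\ell\,d^4y_\ell\,\delta(x_\ell-y_\ell)\,\frac{\delta}{\delta\sigma_{v(\ell)}(x_\ell)}\frac{\delta}{\delta\sigma_{v'(\ell)}(y_\ell)}$. The outcome is a sum over forests, with interpolated covariance $w^{\mathcal F}(v,v')$ equal to the infimum of the $w_\ell$ along the path, or $0$ if $v,v'$ lie in different trees. The positivity of this matrix is part of the BKAR statement, because it is a convex combination of block-diagonal matrices of ones. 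Since the measure factorizes over the connected components, the standard exponential/logarithm argument yields $\log Z(\Lambda)=\sum_n\frac1{n!}\sum_{\cJ}\prod_\ell\int_0^1dw_\ell\,G_\cJ$, a sum over trees only.

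Third, I need the limit. Translation invariance of $C$ and of $d\nu$ lets me fix one integration point, say the position $x_{\ell_0}$ of a root, at $0$, and extract one factor $|\Lambda|$. This is the origin of the rooted-tree formulation, with the root weight adjusting the combinatorics. Boundary effects vanish as $\Lambda\to\mathbb R^4$, provided the remaining integrals converge uniformly.

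The main obstacle is precisely that uniform convergence, i.e.\ absolute summability of the tree expansion. Here the resolvent bound $\|(1-2i\sqrt\lambda C\sigma)^{-1}\|\le 1$, valid for real $\sigma$ and $\lambda>0$ (with a cone extension for complex $\lambda$), bounds each derivative of a vertex. Each tree then costs $O(\lambda)^{n-1}$ times a Cayley factor $n^{n-2}$, and this compensates the $1/n!$ to give a geometric series. The spatial integrals are controlled by the decay of $C$ along the tree. I expect most of the work to go into this step, together with checking that the formal derivation holds in four dimensions with a cutoff.
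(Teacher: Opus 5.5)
The paper gives no proof of this statement; it quotes it from Magnen--Rivasseau. The algebraic part of your plan is exactly that argument: Hubbard--Stratonovich, replicas with the degenerate covariance, BKAR with each $\partial/\partial w_\ell$ producing $\int dx_\ell\, dy_\ell\,\delta(x_\ell-y_\ell)\frac{\delta}{\delta\sigma_{v(\ell)}(x_\ell)}\frac{\delta}{\delta\sigma_{v'(\ell)}(y_\ell)}$, factorization over components so that the logarithm keeps only trees, and translation invariance to pin one position at the origin.

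What fails is the analytic step you rely on for the limit. The bound $\|(1-2i\sqrt{\lambda}C\sigma)^{-1}\|\le 1$ is false. $C\sigma$ is not normal, so although its spectrum is real and the inverse has spectral radius at most $1$, its norm is not controlled. Already with $2\times 2$ matrices, $C>0$ and $\sigma$ real diagonal, $C\sigma$ can approximate a Jordan block with an arbitrarily large off-diagonal entry. The repair is that $\sigma$-derivatives of the loop vertex only ever produce the combination $(1-2i\sqrt{\lambda}C\sigma)^{-1}C=C^{1/2}(1-2iH)^{-1}C^{1/2}$, with $H=\sqrt{\lambda}\,C^{1/2}\sigma C^{1/2}$ Hermitian for real $\sigma$ and $\lambda>0$. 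The contraction bound must be applied to $(1-2iH)^{-1}$, which is why the resolvent is written $D\frac{1}{1+iH}D$ in the paper.

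Your counting is also off. A vertex of tree degree $d_v$ is differentiated $d_v$ times, and $\partial^{d_v}\Tr\log$ is a sum of $(d_v-1)!$ terms, one per cyclic ordering of the insertions. So a tree is not uniformly $O(\lambda)^{n-1}$, and $n^{n-2}$ times the worst case $(n-2)!$ (the star) is not beaten by $1/n!$. You need Cayley's formula at fixed degrees: the number of labeled trees with degrees $(d_v)$ is $(n-2)!/\prod_v(d_v-1)!$. Hence $\sum_{\text{trees}}\prod_v(d_v-1)!=(n-2)!\binom{2n-3}{n-1}\le 4^n(n-2)!$, which $1/n!$ does absorb.

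Finally, the contraction bound alone gives only $|[C^{1/2}(1-2iH)^{-1}C^{1/2}](x,y)|\le C(x,x)^{1/2}C(y,y)^{1/2}$, with no decay in $|x-y|$. So ``decay of $C$ along the tree'' is not available once resolvents sit in the cycles. Instead, integrate the non-root positions from the leaves towards the root. The subtree hanging at a point acts on its parent's cycle as a multiplication operator bounded in sup norm. Each vertex then costs a factor $C(x,x)$ and each integrated position a factor $\|C\|$ (finite for a single slice), uniformly in $\sigma$ and $\Lambda$. With these corrections your outline is the standard proof.
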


 To obtain the connected functions with
external legs,  we need to add resolvents to the initial loop vertices,
where a resolvent is an operator 
\bea
C_j(\sigma_r, x, y ), \quad C_j(\sigma) = D_j  \frac{1}{1+i H } D_j ,
\ea see \cite{MagRiv} for further details.   

\begin{theorem}\label{cumulants}
The cumulants of even order restricted to a single slice are given by a formula of \cite{MagRiv}, Equation (14), but with this article's notation:
\begin{eqnarray}\label{treeformulext} 
\mathfrak{K}^{2\fm}_{j}(\lambda, x_1, ..., x_{2\fm})
&=& \sum_{n=1}^{\infty}\frac{1}{n!} \sum_\cJ \bigg\{ \prod_{\ell\in \cJ}  
\big[ \int_0^1 dw_\ell \int d^2 x_\ell d^2 y_\ell \big]\bigg\}  \nonumber \\
&&\hskip-3.5cm  \prod_{r=1}^{\fm} \bigg\{ \ \int  d\nu_\cJ (\{\sigma_v\}, \{\sigma_r\}, \{ w \}) 
\prod_{\ell\in \cJ} \big[ \delta (x_\ell - y_\ell) 
 \frac{\delta}{\delta \sigma_{v(\ell)}(x_\ell)}\frac{\delta}{\delta \sigma_{v'(\ell)}(y_\ell)} 
 \big]\nonumber \\
 && \hskip-3.5cm  \prod_v V_v  \sum_{\pi 
 \text{ pairings $(x_{\pi(r,1)}, x_{\pi(r,2)})$}}  C_j
 (\sigma_{r}, x_{\pi(r,1)}, x_{\pi(r,2)}) \bigg\} \; ,
\end{eqnarray}
where 
\begin{itemize}

\item each line $\ell$ of the tree $\cJ$ joins two different loop vertices or resolvents
$V_{v(\ell)}$ and $V_{v'(\ell)}$ 
at point $x_{\ell}$ and $y_{\ell}$, which are identified through the function
$\delta (x_\ell - y_\ell) $ because the covariance of $\sigma$ is ultra-local,

\item the sum is over trees $\cJ$ joining the $n+\fm$ loop vertices and resolvents, which have therefore $n+\fm-1$ lines,

\item the measure $d\nu_\cJ (\{\sigma_v\}, \{\sigma_r\}, \{ w \})  $ over the 
$\{\sigma\}$ fields has covariance \ \ $<\sigma_\alpha,\sigma_{\alpha'}>=
\delta (x-y) w^\cJ (\alpha, \alpha', \{ w\})$ where $w^\cJ (\alpha, \alpha', \{ w\})$ is 1 if $\alpha=\alpha'$ 
(where $\alpha, \alpha'\in \{v\}, \{r\}$),
and the infimum of the $w_\ell$ for $\ell$ running over the unique path from $\alpha$ to $\alpha'$ in $\cJ$
if $\alpha\ne \alpha'$. This measure is well-defined because the matrix $w^\cJ$ is positive,

\item the sum over $\pi $ runs over pairings into pairs $(x_{\pi(r,1)}, x_{\pi(r,2)}),r=1,...,\fm$.
\end{itemize}
\end{theorem}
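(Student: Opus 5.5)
The plan is to derive formula \eqref{treeformulext} in three moves: integrate out $\phi$ in the intermediate field representation of Definition~\ref{Generating function of the moments} restricted to the single slice $j$; take $J$-derivatives of $\log Z$; and apply the replica trick together with the BKAR forest formula, exactly as in \cite{MagRiv} for the vacuum case.

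\textbf{Step 1: the $J$-derivatives.} After Gaussian integration over $\phi$, the source enters only through the factor $e^{-<J R(\sigma) J>}$. In the single-slice version, $R(\sigma)$ is replaced by the resolvent $C_j(\sigma)=D_j(1+iH)^{-1}D_j$. Expanding this exponential to order $\fm$ and differentiating $2\fm$ times at $J=0$ shows the following.
\begin{itemize}
\item The moments $\partial^{2\fm}Z/\partial J(x_1)\cdots\partial J(x_{2\fm})$ equal $\int d\nu(\sigma)\, e^{-\sum V(\sigma)}\sum_\pi \prod_{r=1}^{\fm} C_j(\sigma, x_{\pi(r,1)},x_{\pi(r,2)})$.
\item Here the sum over $\pi$ runs over pairings, and the combinatorial factor $\fm!\,2^{\fm}$ is absorbed by the symmetry of $C_j$.
\item Odd moments vanish, consistent with the remark above.
\end{itemize}

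\textbf{Step 2: replicas.} Next I introduce replicas. I expand $e^{-\sum V}=\sum_n \frac{1}{n!}\prod_{v=1}^n(-V_v)$. Each loop vertex $V_v$ receives its own copy $\sigma_v$ of the field, and each resolvent receives its own copy $\sigma_r$. All of these are initially distributed with the degenerate covariance $\delta(x-y)\mathbf 1_{\alpha\alpha'}$, where $\mathbf 1$ is the all-ones matrix; this reproduces the original integral. I then introduce the interpolation parameters $w_{\alpha\alpha'}$ and apply the BKAR forest formula to the $n+\fm$ objects (loop vertices and resolvents). Each derivative in $w_\ell$ produces the pair $\delta(x_\ell-y_\ell)\,\frac{\delta}{\delta\sigma_{v(\ell)}(x_\ell)}\frac{\delta}{\delta\sigma_{v'(\ell)}(y_\ell)}$, because the covariance is ultra-local. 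The interpolated covariance is the infimum $w^{\cJ}$ along the forest paths, which is positive by the standard BKAR argument.

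\textbf{Step 3: the logarithm and the connectedness condition.} Taking the logarithm, with the normalization of $d\nu$, selects the connected components. For the vacuum part this is the standard exponentiation yielding trees, as in the previous theorem. For the cumulant, the condition is that after $\log$ and the $J$-derivatives, exactly the forests in which all $\fm$ resolvents and the $n$ loop vertices lie in a \emph{single} tree $\cJ$ survive. This is the usual identity that the logarithm of a forest expansion is the tree expansion of connected polymers.

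I expect this last point to be the main obstacle: making sure that the disconnected pieces cancel exactly in $\log Z$. In particular, a forest containing several resolvents in different components must be matched against the products of lower moments generated by the derivatives of $\log$. This has to be checked on the level of the moment–cumulant relation, using the multiplicativity of the BKAR measure over connected components.

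The remaining work is bookkeeping. The $\frac{1}{n!}$ counts only loop vertices, since the resolvents are labelled by $r$. Signs and factors of $i\sqrt\lambda$ are absorbed in the definition of $V_v$ and $C_j(\sigma)$. All identities hold as formal power series in $\lambda$, as specified in Definition~\ref{Generating function of the moments}; convergence at fixed slice follows from the resolvent bound $\|(1+iH)^{-1}\|\le1$ for Hermitian $H$ as in \cite{MagRiv}, but is not needed for the identity itself.
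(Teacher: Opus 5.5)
Your argument is correct, and it uses the same ingredients as the derivation the paper relies on. The paper gives no proof of its own and refers to \cite{MagRiv}. The shared ingredients are one replica per loop vertex and per resolvent, the BKAR formula on the $n+\fm$ objects, and the factorization of the interpolated Gaussian measure over the components of a forest, since $X^{\cF}$ vanishes between different components.

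The difference is the order of operations. In \cite{MagRiv} (and in \cite{GuKra}) one keeps $J$ and expands $e^{\frac12\langle J, C_j(\sigma)J\rangle}=\sum_p\frac{1}{p!}\bigl(\frac12\langle J, C_j(\sigma)J\rangle\bigr)^p$, giving each factor its own replica $\sigma_r$. The resolvents then enter the forest formula as a second species of vertices, weighted by an inverse factorial exactly like the loop vertices. The exponential formula immediately yields $\log Z(J)$ as a tree sum. The $2\fm$ derivatives at $J=0$ are taken last: they keep the trees with $p=\fm$ and produce the pairing sum. The $\fm!$ relabellings of the resolvents cancel $1/\fm!$, and differentiating each quadratic form twice cancels its $\frac12$.

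Since you differentiate first, you must recover the logarithm by inverting the moment--cumulant relation, which is the step you flag as the obstacle. It closes as you indicate:
\begin{itemize}
\item In the BKAR expansion of the $2\fm$-th moment, the components without resolvents resum to $Z(0)$ (after splitting $1/n!$ into $\prod_i 1/n_i!$), and this is divided out.
\item The components containing resolvents induce a partition of the pairs of $\pi$. A pairing together with a partition of its pairs is the same datum as a set partition $P$ of $\{x_1,\dots,x_{2\fm}\}$ into even blocks together with a pairing inside each block.
\item Hence the normalized moment is $\sum_P\prod_{B\in P}\kappa(B)$, with $\kappa(B)$ the single-tree sum for the points of $B$.
\item Uniqueness of the inversion, by induction on the number of points, identifies $\kappa$ with the cumulant.
\end{itemize}

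So your route costs an extra M\"obius-inversion step. The standard route gets connectedness for free and gives the full generating function $\log Z(J)$ at once. Yours has the small merit of making explicit that connectedness in the cumulant sense is exactly ``single BKAR tree''.

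One imprecision in Step 1: the $\fm!\,2^{\fm}$ is cancelled by the $1/\fm!$ of the Taylor expansion and the factors $\frac12$ of the Gaussian exponent. The symmetry of the kernel $C_j(\sigma)$ only lets ordered pairings collapse onto unordered ones.
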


\begin{theorem}\label{Boundcumulants}
The series (\ref{treeformulext}) is absolutely convergent for $\lambda$ small enough, 
its sum is uniformly Borel summable in $\lambda$, and we have:
\begin{equation}\label{decaybound}
\vert  \mathfrak{K}^{2\fm}_{j}(\lambda, x_1, ..., x_{2\fm}) \vert \le (2\fm)!  
\vert \lambda \vert^{\fm-1}  e^{4\pi\fm j}
e^{-cM^j d(x_1, ..., x_{2\fm})}
\end{equation}
where $d(x_1, ..., x_{2\fm})$ is the length of the shortest tree which connects all the
points $x_1, ..., x_{2\fm}$, and $c=\frac{e^{-2\pi}}{2}\sim 0.0157 $.
\end{theorem}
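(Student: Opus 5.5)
We follow the standard single-slice LVE strategy of \cite{MagRiv}, adapted to the resolvents $C_j(\sigma_r,\cdot,\cdot)$ of Theorem~\ref{cumulants}. The proof will not try to resum anything. Instead, it bounds the tree formula \eqref{treeformulext} term by term, uniformly in the interpolation parameters $w_\ell$ and in the fields $\sigma$, and then sums over trees.

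The first step is to make the derivatives $\frac{\delta}{\delta\sigma}$ explicit. Acting on a loop vertex $V_v=\Tr\log(1+i\sqrt\lambda\,C_j^{1/2}\sigma C_j^{1/2})$ or on a resolvent, each derivative opens a cycle and inserts a factor $\sqrt\lambda$, a propagator $C_j$ and a resolvent $(1+iH)^{-1}$. The resulting amplitude is a single cycle: an Eulerian walk around the tree $\cJ$, as in \cite{MagRiv}. For a $\fm$-pairing, it breaks into $\fm$ open chains whose ends are the external points $x_{\pi(r,1)},x_{\pi(r,2)}$. The key analytic input is that $H$ is self-adjoint for real $\sigma$, so $\|(1+iH)^{-1}\|\le 1$. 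For complex $\lambda$ in a cardioid or sector domain the norm stays bounded by a constant $(\cos\frac{\arg\lambda}{2})^{-1}$, which is exactly what uniform Borel summability requires. After the resolvents are bounded in norm, the remaining integrand involves only the propagators $C_j$ along the tree lines and chains. The Gaussian measure $d\nu_\cJ$ is normalized, so integrating over $\sigma$ costs nothing.

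The second step is the power counting in $\lambda$ and $M^j$. Each tree line carries two half-derivatives, hence a factor $\lambda$. There are $n+\fm-1$ lines, and each loop vertex needs at least one line. Apart from the $n$ vertex factors, this yields $\lambda^{n+\fm-1}$. Each internal line integration $\int d^2x_\ell$ is paid for by the decay of Lemma~\ref{Lem1}: $\int C_j(x,y)\,d^2y\le O(M^{-2j})$. The spatial integrals of a product of $C_j$'s along each chain are bounded by Cauchy--Schwarz, using operator norms $\|C_j\|\le O(M^{-2j})$ except at the external points. At an external point we use the pointwise bound $C_j\le K e^{-cM^j|x-y|}$ with $K\le e^{2\pi}$, giving the factor $e^{4\pi\fm j}$ for $2\fm$ external ends. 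The decay $e^{-cM^j d(x_1,\dots,x_{2\fm})}$ comes from the same pointwise bound. Along each chain the triangle inequality joins the decays through the internal points, and the union of chains and tree lines connects all external points. The total exponent is therefore at least the length of a connecting tree, hence at least $d$.

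The third step, and the main obstacle, is summing over trees and vertices. Cayley's formula gives $(n+\fm)^{n+\fm-2}$ labelled trees, which $\frac1{n!}$ turns into $C^n$ up to an $\fm$-dependent factor. The number of pairings $\pi$ is $(2\fm-1)!!$. Together with the ways of attaching the $2\fm$ external points, this is bounded by the $(2\fm)!$ in \eqref{decaybound}. The delicate point is that the $M^{-2j}$ gains from the spatial integrations must exactly compensate the coordination-number factors at vertices of high degree, as in the $\sigma$-derivative counting of \cite{MagRiv}. This requires a careful Cauchy--Schwarz choice of which $C_j^{1/2}$ carries the operator norm. The resulting sum $\sum_n (C|\lambda|)^n$ converges for $|\lambda|$ small, and one power of $\lambda$ per line leaves the announced $|\lambda|^{\fm-1}$. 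Borel summability then follows from the Nevanlinna--Sokal criterion: analyticity in the cardioid together with Taylor remainders bounded by $K^n n!$. Both come from applying the same tree bound to the $n$-th order remainder.
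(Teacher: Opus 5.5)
The paper gives no argument of its own for this statement; it is quoted from \cite{MagRiv}. So your sketch has to stand alone, and it breaks at the spatial decay.

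\textbf{The decay step.} In your second step you bound $R(\sigma)$ in norm and then treat what remains as products of pointwise kernels $C_j$, to which Lemma~\ref{Lem1} and the triangle inequality apply. That is not the structure you have. After the $\delta/\delta\sigma$ derivatives, every link of a resolvent chain (and of the cycle around a loop vertex) is a kernel $(D_jR(\sigma)D_j)(x_i,x_{i+1})$ with a resolvent \emph{inside} it. For $R(\sigma)$ you only have the operator-norm bound $\|R\|\le(\cos\frac{\arg\lambda}{2})^{-1}$, uniformly in $\sigma$. You cannot pull an operator norm out of a kernel and keep the pointwise structure of the rest.

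The only pointwise estimate this gives for a link is Cauchy--Schwarz, $|(D_jRD_j)(x,y)|\le C_j(x,x)^{1/2}\,\|R\|\,C_j(y,y)^{1/2}=\|R\|$ (recall $T_j=1$). This has no decay in $|x-y|$. With it, the unit volume (each $\int d^2x_\ell$ costs at most $1$, so no $M^{-2j}$ gains are needed), and correct combinatorics, you do get a bound of the form $K_\fm|\lambda|^{\fm-1}$. You do not get the factor $e^{-cM^jd(x_1,\dots,x_{2\fm})}$.

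That factor needs a separate mechanism controlling the off-diagonal decay of $D_jR(\sigma)D_j$, for instance an exponentially weighted (Combes--Thomas type) resolvent estimate. This is not automatic here. Conjugating by $e^{\epsilon M^j u\cdot x}$ commutes with $\sigma$, but it destroys the self-adjointness of $H$. Since $\sigma$ is unbounded, the change cannot be treated as a small perturbation uniformly in $\sigma$. Such arguments also typically lose part of the decay rate, so getting exactly the constant $c$ of Lemma~\ref{Lem1} would need its own justification.

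\textbf{The tree sum.} Large coordination numbers are not compensated by spatial $M^{-2j}$ gains. A loop vertex hit by $d$ derivatives produces $(d-1)!$ cyclically ordered terms, and a resolvent chain with $d$ insertions produces $d!$ orderings. What absorbs these is the Cayley count with prescribed degrees: there are $(N-2)!/\prod_i(d_i-1)!$ trees on $N=n+\fm$ vertices with degrees $d_i$. Against the $1/n!$, this leaves $C^n$ times a polynomial in $n$. The plain count $(n+\fm)^{n+\fm-2}$ you quote ignores these orderings, so as written the sum over $n$ is not controlled.

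\textbf{The prefactor.} $K^{2\fm}$ with $K\le e^{2\pi}$ gives $e^{4\pi\fm}$, not $e^{4\pi\fm j}$. In this two-dimensional setting the external ends cost nothing $j$-dependent. The factor $e^{4\pi\fm j}=M^{2\fm j}$ is a remnant of the four-dimensional single-slice bound of \cite{MagRiv}, where $C_j(x,x)\sim M^{2j}$. Since $e^{4\pi\fm}\le e^{4\pi\fm j}$ for $j\ge 1$, this does no harm to the inequality, but your account of where the factor comes from is wrong.

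The Nevanlinna--Sokal part of your argument is fine in outline once the uniform bounds are in place.
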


\begin{remark} Our notations differ slightly from \cite{MagRiv}. 
But the number of pairings, i.e., $|{\pi}|$, is the same as in \cite{MagRiv}.
\end{remark}

By combining Lemma \ref{Lem1} and Theorem \ref{Boundcumulants} we arrived at

\begin{lemma}
\bea
\vert  \mathfrak{K}^{2\fm}_{j}(\lambda, x_1, ..., x_{2\fm}) \vert  \le (2\fm)!  
\vert \lambda \vert^{\fm-1}  e^{4\pi\fm j} e^{-cM^j }
\ea
when  $c=\frac{e^{-2\pi}}{2}\sim 0.0157 $.
\end{lemma}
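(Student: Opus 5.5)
The plan is to derive the lemma from Theorem \ref{Boundcumulants}. The bound \eqref{decaybound} already contains the factors $(2\fm)!\,|\lambda|^{\fm-1}e^{4\pi\fm j}$ with the same constant $c=\frac{e^{-2\pi}}{2}$, so everything reduces to showing
\[
e^{-cM^j d(x_1,\dots,x_{2\fm})}\le e^{-cM^j}.
\]
Since $c>0$ and $M^j>0$, this is equivalent to the geometric inequality $d(x_1,\dots,x_{2\fm})\ge 1$ for the length of the shortest tree connecting the $2\fm$ points in $[0,1]^2$.

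\emph{First step.} I would go back to where $e^{-cM^jd}$ is produced in the proof of Theorem \ref{Boundcumulants}. There, each resolvent $C_j(\sigma_r,x_{\pi(r,1)},x_{\pi(r,2)})$ is bounded by a product of kernels of $C_{\le j}$, each controlled by Lemma \ref{Lem1}. Each such kernel contributes $e^{-cM^j|x-y|}$ along a path in $[0,1]^2$, and the product over the tree of lines gives $e^{-cM^j\cdot(\text{length})}$. I would then record the length actually obtained. It is the length of a connected set joining all external points, hence at least $d(x_1,\dots,x_{2\fm})$, and I would ask whether the estimate also contains one full unit-length factor. This could happen because the free boundary conditions on the unit square force each resolvent chain to run over a path of length comparable to the side of the square.

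\emph{Second step.} If such a unit-length factor is present, the lemma follows at once by multiplying the exponentials. In the regime $d(x_1,\dots,x_{2\fm})\ge 1$ the inequality holds directly from \eqref{decaybound}, with no extra input.

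\emph{Main obstacle: the regime $d<1$.} Here the inequality $e^{-cM^jd}\le e^{-cM^j}$ is false as written, for instance when all the $x_i$ coincide. Lemma \ref{Lem1} alone gives no decay in that case, since $C_j(x,x)=T_j=1$. I would first try to extract the missing $e^{-cM^j}$ from the slack in the prefactor. The theorem allows $e^{4\pi\fm j}=M^{2\fm j}$, whereas a single slice with $\fm$ resolvents naively needs only about $M^{2\fm j}$ times a constant. I would look for an extra gain, such as a Wick-ordering cancellation, before giving up on this route. If no such gain exists, the lemma must be read with $d(x_1,\dots,x_{2\fm})$ normalised to be at least $1$, for example by including a connection to the boundary of $[0,1]^2$. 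That is the natural reading of ``combining Lemma \ref{Lem1} and Theorem \ref{Boundcumulants}'', and under it the proof is the one-line comparison of exponentials above. I expect this point, justifying $d\ge1$ in the lemma's setting, to be where the argument either closes or needs the hypothesis made explicit.
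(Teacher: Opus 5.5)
Your reduction is the same as the paper's: insert \eqref{decaybound} and compare exponentials. The paper closes it in one line, arguing that all points lie in $[0,1]^2$, ``hence $d(x_1,\dots,x_{2\fm})\le 1$''. As you noticed, this runs the wrong way, since $d\le 1$ gives $e^{-cM^jd}\ge e^{-cM^j}$. What is needed is $d\ge 1$, and nothing in the setting supplies it. Your diagnosis is therefore correct, and sharper than the paper's argument. But your proposal stops short of the right conclusion. The obstruction you hit is not a missing idea; the statement itself is false when $d$ is small. So none of the rescue routes you leave open can succeed: a hidden unit-length decay factor from the boundary, slack in the prefactor, or a Wick-ordering gain.

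Here is a concrete counterexample. Take $\fm=1$, $x_1=x_2=x$, $j=2$. As $\lambda\to 0$, the single-slice two-point cumulant tends to its free value $C_2(x,x)=T_2=1$. This is the order-$\lambda^0$ term, so renormalisation has nothing to cancel, and $|\lambda|^{\fm-1}=1$ gives no help. With $M=e^{2\pi}$, the right-hand side is
\[
2\,e^{8\pi}\,e^{-cM^2}=2\,e^{8\pi-e^{2\pi}/2}<10^{-100}.
\]
Structurally, $e^{4\pi\fm j}=M^{2\fm j}$ is a power of $M^j$, whereas $e^{-cM^j}$ decays faster than every power of $M^{-j}$. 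So no slack in powers of $M^j$, and no $M^{-j/2}$-type renormalisation gain, can produce the missing factor.

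Reinterpreting $d$ so that $d\ge 1$ always holds does not help either. Theorem \ref{Boundcumulants} with such a $d$ would imply the lemma, and so would be refuted by the same example.

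The only correct statements this route yields are two. One is the lemma under the explicit extra hypothesis $d(x_1,\dots,x_{2\fm})\ge 1$, where your one-line comparison is the whole proof. The other holds for arbitrary points: the bound with $e^{-cM^j}$ replaced by $1$, which uses only $d\ge 0$. You should say plainly that the lemma fails as written and commit to one of these.
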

\prf It is obvious since every pair  with $x_{\pi(r,1)}$ and $x_{\pi(r,2)}$ is restricted to $[0,1]^2$. Hence $\vert d(x_1, ..., x_{2\fm})\vert  \le 1$.
\qed

\subsection{From   \texorpdfstring{\cite{MagRiv}}{MagRiv} to \texorpdfstring{\cite{RiZh2}}{RiZh2}}
\label{ss2.4}

Now we are going to perform intermediate field representation from a single slice to multiple slices.
The main problem of the model of \cite{RiZh2} compared to the toy model of \cite{MLVE} is that this action is not positive \cite{nelson}.
Observe that for $\lambda >0$ 
\bee   e^{-\frac{\lambda}{2}\Tr [(\phi^2-3T)^2  -6 T^2]}  \le e^{3 \lambda T^2} , \label{theory1}
\ee
producing Nelson's famous divergent bound \cite{nelson}  as $j_{max} \to \infty$:
\bee  \vert Z^{j_{max}}(\lambda) \vert \le e^{ \lambda O(1) j_{max}^2}   . \label{nelbound}
\ee

Introducing the intermediate field $\sigma$, integrating out the
terms that are quadratic in $\phi(x)$ and using that $\Tr T\sigma = \Tr C \sigma $, \footnote{For simplicity in this subsection we use $C, T$ for $C_{\le j_{max}},T_{\le j_{max}}$.} we get: 
\bea
Z^{j_{max}}(\lambda)&=&\int d\nu(\sigma) e^{\Tr  \bigl(3\lambda  T^2  + 3i \sqrt{\lambda} T
\sigma-\frac{1}{2} \log[1+2i \sqrt{\lambda}C \sigma ]  
\bigr)}, \label{expre0} 
\\ &=&\int d\nu(\sigma) e^{ \Tr  \bigl(3\lambda  T^2  + 2i \sqrt{\lambda} T
\sigma-\frac{1}{2} \log_2[1+2i \sqrt{\lambda}C\sigma ] 
\bigr)}, \label{expre}
\eea
where $d\nu(\sigma)$ is the
ultralocal measure on $\sigma$ with covariance $\delta(x-y)$, and the function
\bee \log_2 (1-x) \equiv x+ \log (1-x) = O(x^2)
\ee 
has to be defined in the operator sense, by the kernel:
\bea [  \log_2 (1+2i \sqrt{\lambda}C\sigma )] (x,y)&=&- \sum_{k=2}^{\infty} \frac{(-2i \sqrt{\lambda})^k}{k}   \int d^2x_1 \cdots  \int d^2x_{k-1} \\
&& \hskip-2cm \bigl[  C (x,x_1) \sigma (x_1)  
C (x_1,x_2)  \cdots  C (x_{k-1},y) \sigma (y)\bigr]. 
\label{loops}
\ea
(Adding a trace on the left-hand side of \eqref{loops} would correspond to multiplying the right-hand side by $\delta(x-y)$ and to integrating over $x$ and $y$.) 
The perturbation theory in terms of $\sigma$ is indexed by intermediate field Feynman graphs whose vertices are the loops in Figure \ref{graphinter6}, obtained by the expansion \eqref{loops} into traces,
and whose $\sigma$-propagators correspond to the former $\phi^4$ \emph{vertices} of ordinary perturbation expansion, hence bear a coupling constant $\lambda$. The loop vertices are themselves cycles of the old $\phi^4$
propagators, which now occur at each \emph{corner} of the loop vertices\footnote{Remark that such intermediate field
Feynman graphs are really combinatorial maps \cite{GRS}. It means that we can define a clockwise
cyclic ordering at each loop vertex. The notion of the \emph{next} intermediate $\sigma$ field (or $\sigma$ half-propagator) at any propagator is then well-defined.}. We 
call these corners $\phi^4$ propagators simply \emph{c-propagators} for short.

\begin{figure}[!t]
\begin{center}
\begin{tikzpicture}[scale=.8]
\draw [green,thick] (0,0) circle (1) ;
\draw [green,thick] (8,0) circle (1) ;
\draw [green,thick] (4,4) circle (1) ;
\draw [red](0.7071,0.7071) -- (3.2929,3.2929) ;\draw [red] (0.7071,0.7071) node {$\bullet$} ;\draw  [red] (3.2929,3.2929) node {$\bullet$} ;
\draw [red](-0.7071,0.7071) -- (8.7071,-0.7071) ;\draw [red]  (-0.7071,0.7071) node {$\bullet$} ;\draw  [red] (8.7071,-0.7071) node {$\bullet$} ;
\draw [red](4,3) -- (4,5) ;\draw  [red] (4,3) node {$\bullet$} ;\draw  [red] (4,5) node {$\bullet$} ;
\draw [red](4.7071,3.2929) -- (7.2929,0.7071) ;\draw [red] (4.7071,3.2929)  node {$\bullet$} ;\draw  [red] (7.2929,0.7071) node {$\bullet$} ;
\draw [red] (7,0) -- (9,0) ;\draw [red]  (7,0) node {$\bullet$} ;\draw [red]  (9,0) node {$\bullet$} ;
\draw [red](.7071,-.7071) -- (7.2929,-.7071) ;\draw [red]  (.7071,-.7071) node {$\bullet$} ;\draw [red]  (7.2929,-.7071) 
node {$\bullet$} ;
\draw [red](-.7071,-.7071) -- (8.7071,0.7071) ;\draw [red]  (-.7071,-.7071) node {$\bullet$} ;\draw [red]  (8.7071,0.7071) 
node {$\bullet$} ;
\end{tikzpicture}
\end{center}
\caption{An intermediate field graph with 14 vertices and 7 intermediate propagators in red, and 3 cycles in green,
corresponding of ordinary Feynman graph of order $\lambda^7$.}
\label{graphinter6}
\end{figure}

Since the loop interaction in \eqref{expre} is a $\log_2$, the expansion into intermediate field Feynman graphs has no \emph{loop vertex 
of length one}, i.e. no loop vertex with a single corner-propagator. However, renormalization in the intermediate 
field expansion does not reduce to this observation. Wick-ordering sets to zero
all tadpoles in the ordinary perturbative expansion. Accordingly, the counterterm $2i \sqrt{\lambda} T \sigma$ in \eqref{expre} will precisely 
compensate any intermediate field Feynman graph in which a 
$\sigma$-propagator has \emph{length one}, that is, it directly joins the two ends of a c-propagator. Indeed, such a $\sigma$ propagator of length one can be flipped into a loop vertex of length one. This 
is a key difficulty of the LVE in the scalar $\phi^4$ theory, ultimately related to Nelson's bound \eqref{nelbound}.
The intermediate field representation breaks the discrete dualities of the vertex, 
and symmetry breaking, as usual, makes renormalization more difficult \footnote{
This difficulty occurs also in matrix models with quartic interaction, since their vertex has a duality (of order 2 instead of 3).
It does not occur in vector models \cite{MLVE}, nor in tensor models with melonic quartic interactions \cite{DR}, since their vertex has no dualities.}.
 
This difficulty requires us to perform an additional ``slice-testing" expansion.
For any slice $j$, starting with $j_{max}$ and working up to $j=1$, it searches for the presence of \emph{one}
c-propagator or \emph{one} $2i \sqrt{\lambda} T_j$ counter-term of that slice. Then it performs a Wick contraction  of the $\sigma$ 
field \emph{next} to the c-propagator $C_j$, and also of the $\sigma$ field attached to the $T_j$ counter-term, allowing an explicit compensation between tadpole graphs and their counter-terms, ultimately bringing convergent factors $M^{-j/2}$ which will tame Nelson's bound.

However, there is once again a difficulty. 
Testing for the presence of a c-propagator $C_j$ with an interpolation parameter $t_j$ and a Taylor expansion step
does not create just a $C_j$. 
Since $C_j$'s occur within the $\log_2[1+2i \sqrt{\lambda}C\sigma ] $ interaction of \eqref{expre},
derived propagators come equipped with a \emph{resolvent}, which is the operator-valued function of the intermediate field defined by
the same definition of our paper, i.e. $R(\sigma) \equiv [1+2i \sqrt{\lambda}C\sigma ]^{-1}$.

The amplitudes of the combinatorial objects obtained at the end of this slice-testing expansion belong therefore to a new class,
\emph{resolvent amplitudes},  described in \cite{RiZh2}.

The resolvent amplitude $A_G$ of an intermediate-field graph $G$ (with slice attributions $\{j(\ell) \}$) is defined as
\bee  A_G (t, \sigma) =  \prod_{v \in \cV(G)} \bigl[ (-\lambda) \int_{[0,1]^2}  d^2 x_v  \bigr]  \prod_{\ell \in CP(G)} [ R(\sigma)C_{j(\ell)}  ] (x_\ell, x'_\ell ),
\label{ampres}
\ee 
where $\cV(G)$ is the set of $\sigma$-propagators of $G$, 
\hskip-.05cm $CP(G)$ is the set of c-propagators of $G$ and $x_\ell, x'_\ell$ are the positions of the two vertices at the ends of the c-propagator $\ell$. 

\begin{figure}[!t]
\begin{center}
\begin{tikzpicture}[scale=.8]
\draw [green,thick](0,0) circle (1) ;\draw [green,thick](3,0) circle (1) ;
\draw [red] (1,0) node {$\bullet$} ;
\draw  [red] (2,0) node {$\bullet$} ;\draw [red,thick](1,0) -- (2,0) ;

\draw [green,thick](-6,0) circle (1) ;\draw [red,thick](-7,0) -- (-5,0) ;
\draw  [red] (-7,0) node {$\bullet$} ;\draw  [red] (-5,0) node {$\bullet$} ;

\draw (-7,-1) node[below]{$G_1$} ;\draw (1,-1) node[below]{$G_2$} ;
\end{tikzpicture}
\end{center}
\caption {First-order ($\lambda^1$) intermediate field  \emph{connected graphs (up to symmetry)}, vertices and intermediate propagators pictured in red, cycles pictured in green.}
\label{graphinter7}
\end{figure}

The \emph{renormalized} amplitude of the same graph is the same, but with subtracted tadpole resolvents:
\bea  A^R_G (t, \sigma) &=&  \prod_{v \in \cV(G)} \bigl[ (-\lambda) \int_{[0,1]^2}  d^2 x_v  \bigr]  
\prod_{\genfrac{}{}{0pt}{}{\ell \in CP(G)}{\ell\; tadpole}}  [(R -1)C_{j(\ell)} (\sigma)] (x_\ell, x'_\ell )\nonumber\\
&& \prod_{\genfrac{}{}{0pt}{}{\ell \in CP(G)}{\ell \; not \; tadpole}}  [   R(\sigma)C_{j(\ell)}] (x_\ell, x'_\ell ).
\label{ampresren}
\ea
Hence, in renormalized resolvent amplitudes, c-propagators are not ordinary $C's$ but either
$(R-1)C$ or $RC$ depending whether the c-propagator is a tadpole or not. They correspond, therefore, when expanding the $R$
or $R-1$ factors and performing the $\sigma$ Wick contractions, to \emph{infinite series}
of ordinary $\phi^4$ graphs, but with the particularity that the initial propagators $C$'s of the renormalized resolvent graph
cannot be tadpoles. For instance, at order 1 and 2 there are two connected resolvent graphs
associated with the two intermediate graphs of Figure \ref{graphinter7} and Figure \ref{graphinter8}. One graph (that has the label $G_1$)  has a tadpole, and for that tadpole simply replace the c-propagator with the $(R-1)C$ factor. 

\begin{figure}[!t]
\begin{center}
\begin{tikzpicture}[scale=.8]

\draw [green,thick](0,0) circle (1) ;
\draw [green,thick](3,0) circle (1) ;
\draw [red] (1,0) node {$\bullet$} ;
\draw  [red] (2,0) node {$\bullet$} ;
\draw [red,thick](0,-1) -- (0,1) ;
\draw [red,thick](1,0) -- (2,0) ;\draw  [red] (0,1) node {$\bullet$} ;\draw  [red] (0,-1) node {$\bullet$} ;

\draw [green,thick](8,0) circle (1) ;
\draw [green,thick](11,0) circle (1) ;
\draw [green,thick](14,0) circle (1) ;

\draw [red,thick](9,0) -- (10,0) ;\draw [red,thick](12,0) -- (13,0) ;
\draw  [red] (9,0) node {$\bullet$} ;\draw  [red] (10,0) node {$\bullet$} ;
\draw  [red] (12,0) node {$\bullet$} ;\draw  [red] (13,0) node {$\bullet$} ;
\draw (1.5,-1) node[below]{$G_1$} ;
\draw (9.5,-1) node[below]{$G_2$} ;
\end{tikzpicture}
\end{center}
\caption {Second-order ($\lambda^2$) intermediate field \emph{connected graphs (up to symmetry)}, vertices and intermediate propagators pictured in red, cycles pictured in green.}
\label{graphinter8}
\end{figure}

Consider any \emph{slice-subset} $\mJ : \{j_1, \cdots j_p\} \subset \cS = [1, \cdots, j_{max}]$.\footnote{We really need the parameter $J$ of the first part of \cite{RiZh2}, therefore the parameter $J$ of the second part of \cite{RiZh2} is changed into $\mJ$.}
A $\mJ$-resolvent graph is defined as a resolvent graph which exactly
$p$ c-propagators must bear marks $\{ j_1, \cdots, j_p\}$. Finally a 
$\mJ$-resolvent graph is called \emph{minimal} if every connected component of the graph bears at least a mark,
and the total perturbative order of the graph, i.e. the total number of $\sigma$-propagators, is at most $\vert \mJ \vert$. The set of minimal
$\mJ$-resolvent graphs is noted $\cG (\mJ)$, and we denote $\cG = \cup_{\mJ} \cG (\mJ)$. By convention, we could say that
$\mJ = \emptyset$ is allowed, resulting in a single ``empty graph" in $ \cG (\emptyset)$ with no propagator. It will correspond to the free theory, hence to the term 1 in the expansion of $  Z^{j_{max}}(\lambda)$ below.
A factorization expansion quite similar to that of \cite{MLVE} can be performed. The marked propagators provide the good factors which ultimately pay for the Nelson bound and all combinatorics.

\subsection{Slice-testing Expansion} \label{slitest}

We introduce inductively interpolation parameters $t_j \in [0,1]$ for the $j$-th scale of the propagators. 

We shall write simply $t$ for the family $\{t_j\}, 1\le j \le j_{max}$. This means that we write
\bee  C(t)  = \sum_{j=1}^{j_{max} } t_j C_j , \quad  T(t)  = \sum_{j=1}^{j_{max} }  t_j T_j ,\;\;  T_j = C_j (x, x),
\ee
\bee V (t)   = \Tr  \bigl( 3\lambda  T^2(t) + 2i \sqrt{\lambda} T(t) 
\sigma-\frac{1}{2} \log_2[1+2i \sqrt{\lambda}C(t)\sigma ]  \bigr) . \label{simpleslic}
\ee

We perform a single first-order Taylor expansion step in each $t_j$ between 0 and 1. It results in
\bee  Z^{j_{max}}(\lambda)=   \sum_{\mJ \subset \cS} \int d\nu  (\sigma)  \prod_{j \in \mJ}  \int_0^1 dt_j\frac{d}{dt_j} e^{ V (t) } \vert_{t_j = 0 \; {\rm for} \; j \; \not\in \mJ} . \label{testing}
\ee

Each $\frac{d}{dt_j}$ hits either a propagator or a tadpole, resulting in a well-defined $T_j$ or a well-defined $C_j \sigma$ brought down from the exponential.
Using that $(\log_2 (1+x) )' = (1+x)^{-1} -1 = - x/(1+x)$, the $T_j$'s  comes equipped with a $T(t) $ or $ \sqrt{\lambda} \sigma $, and the $C_j$'s comes equipped with an $R(t)  -1$, where 
\bee  R(t) = [1+2i \sqrt{\lambda}C(t)\sigma ]^{-1} .
\ee

The following Lemma is proved in \cite{RiZh2}:
\begin{lemma} \label{testinglemma}
The general term of the testing expansion is 
\bee  Z^{j_{max}}(\lambda)=   \sum_{\mJ \subset \cS}
\sum_{G \subset \cG (\mJ)}  c_G
\int d\nu  (\sigma)     \prod_{j \in \mJ (G)}  
\int_0^1 dt_j  \; \bigl[ e^{V (t) }  A^R_G (t, \sigma)   \bigr]_{t_j = 0 \; {\rm for} \; j \; \not\in \mJ (G) } \label{testing1}
\ee 
where the sum over $\cG (\mJ)$ runs over a set of minimal (vacuum) resolvent graphs (connected or not). 
The renormalized amplitudes $A^R_G$
are defined by \eqref{ampresren}, where the index $j(\ell)$ specifies the markings, that is restricts the c-propagator $\ell$ to 
be $C_j$ if that propagator bears the mark $j$. All marked propagators which belong to a short cycle (i.e. potential tadpole)
are \emph{renormalized}, hence accompanied by an $R-1$ resolvent factor, and not an $R$ factor. 
The c-propagators which do not bear any mark are equal to $C(t)$. 
\end{lemma}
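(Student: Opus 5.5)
We expand $\prod_{j\in\mJ}\frac{d}{dt_j}$ acting on $e^{V(t)}$ by the Leibniz rule and reorganise the resulting terms as resolvent graphs. First, differentiating \eqref{testing} in each $t_j$ with the derivatives ordered from $j_{max}$ downwards, every $\frac{d}{dt_j}$ either lands on $e^{V(t)}$, bringing down $\frac{d}{dt_j}V(t)$, or lands on a factor already produced by an earlier derivative. The direct derivative is computed from \eqref{simpleslic} with $(\log_2(1+x))'=-x/(1+x)$:
\[
\frac{d}{dt_j}V(t)=\Tr\Bigl(6\lambda T(t)T_j+2i\sqrt\lambda\,T_j\sigma+i\sqrt\lambda\,(R(t)-1)\,C_j\sigma\Bigr).
\]
A derivative landing on a previously produced $R(t)$ gives $-2i\sqrt\lambda\,R(t)C_j\sigma R(t)$, which just inserts a new c-propagator of slice $j$ into an existing resolvent chain. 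A derivative landing on a $T(t)$ gives $T_j$, and one landing on an $R(t)-1$ gives the same insertion. In every case exactly one new c-propagator (or tadpole) bearing mark $j$ is created. Hence after all $|\mJ|$ derivatives we obtain a finite sum of products of traces, each containing exactly $p=|\mJ|$ marked objects, at most $p$ explicit $\sigma$ insertions (each carrying $\sqrt\lambda$), and resolvents $R(t)$ in between. This is the combinatorial content of \eqref{testing1}.

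Second, we make the bookkeeping graphical. We integrate by parts in $d\nu(\sigma)$, pairing each explicit $\sigma$, in particular the one next to a $C_j$, with another explicit $\sigma$ or with a derivative acting on $e^{V(t)}$ and on the resolvents. The Wick contraction of the $\sigma$ attached to $2i\sqrt\lambda T_j$ and the contraction of the $\sigma$ next to $C_j$ that closes a length-one cycle produce identical kernels with opposite signs. This is exactly the cancellation encoded by the counterterm, since $\Tr T\sigma=\Tr C\sigma$. What survives on a short cycle is therefore $(R-1)C_j$ rather than $RC_j$, which justifies the renormalized form \eqref{ampresren}. Each $\sigma$-contraction produces a $\sigma$-propagator with factor $(-\lambda)\int d^2x_v$, because the two $\sqrt\lambda$ factors combine with the ultralocal covariance. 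The resulting objects are graphs $G$ with exactly $p$ marked c-propagators and at most $p$ $\sigma$-propagators, which is the minimality condition. Each connected component carries a mark, since every trace was generated by some $\frac{d}{dt_j}$. The numerical coefficients ($\frac12$, $2i$, signs, symmetry factors) are collected into $c_G$, and setting $t_j=0$ for $j\notin\mJ$ gives the stated form.

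The main obstacle is controlling the derivatives that hit $e^{V(t)}$ during the integration by parts. Each such derivative generates new $\sigma$'s and threatens the bound "order $\le|\mJ|$". The first approach I would try follows \cite{RiZh2}: contract only the \emph{single} $\sigma$ adjacent to each marked propagator or counterterm. Any derivative of $e^{V}$ produced by that contraction is then re-absorbed as an $(R-1)$ or $R$ factor on an unmarked c-propagator equal to $C(t)$, so that no additional $\lambda$-order is forced. One then checks by induction on $|\mJ|$ that the number of $\sigma$-propagators never exceeds the number of marks.
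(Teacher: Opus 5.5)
Your skeleton matches what the paper sketches before deferring to \cite{RiZh2}: one Taylor step per $t_j$, Wick contraction of the $\sigma$ next to $C_j$ or attached to $T_j$, tadpole/counterterm compensation, then bookkeeping into minimal $\mJ$-graphs. The gap is in the step that carries the whole content of the lemma, namely that every tadpole ends up with $R-1$. You assert it through a cancellation that is false as stated. With $T=T(t)$, $C=C(t)$, $R=R(t)$ one has
\[
\partial_{t_j}V=6\lambda\, TT_j+2i\sqrt\lambda\, T_j\Tr\sigma-i\sqrt\lambda\,\Tr\bigl[(R-1)C_j\sigma\bigr],\qquad \frac{\delta V}{\delta\sigma(x)}=2i\sqrt\lambda\, T-i\sqrt\lambda\,[(R-1)C](x,x).
\]
Your sign on the last term of $\partial_{t_j}V$ is wrong. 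Take already $\mJ=\{j\}$.
\begin{itemize}
\item Integrating by parts the $\sigma$ next to $C_j$ gives the length-one term $-2\lambda\int[RC](x,x)[RC_j](x,x)$. It also gives, from $e^{V}$, the terms $2\lambda T\int[(R-1)C_j](x,x)-\lambda\int[(R-1)C](x,x)[(R-1)C_j](x,x)$.
\item The $\sigma$ of the $T_j$ counterterm gives $-4\lambda TT_j+2\lambda T_j\int[(R-1)C](x,x)$.
\end{itemize}
These are not ``identical kernels with opposite signs''. Split $[RC](x,x)=T+[(R-1)C](x,x)$ in the length-one term and look at the bare pieces:
\begin{itemize}
\item The $T_j\,(R-1)C$ piece is indeed cancelled by the counterterm contraction.
\item The $T\,(R-1)C_j$ piece is cancelled only by the $2i\sqrt\lambda T$ part of $\delta V/\delta\sigma$. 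You claim that part is ``re-absorbed as an $(R-1)$ or $R$ factor''; in fact it is a bare tadpole and can only cancel.
\item The $TT_j$ piece cancels only because $-2-4+6=0$. This needs the term $6\lambda TT_j$ coming from the vacuum counterterm $3\lambda T^2(t)$, which plays no role in your argument.
\end{itemize}
The survivor is $-3\lambda\int[(R-1)C](x,x)[(R-1)C_j](x,x)$, with both tadpoles renormalized. With your sign the $TT_j$ coefficient would be $4\lambda$: exactly the $j_{max}$-divergent, unrenormalized term the lemma is meant to exclude.

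For general $\mJ$ you need an induction on the renormalized structure, not on the order. The invariant would be: after treating slices $j_{max}$ down to $j$, the integrand is $\sum_G c_G\, e^{V}A^R_G$ with all tadpoles renormalized, and the next $\partial_{t_{j'}}$ followed by contraction of the single new $\sigma$ preserves this. Checking it means pairing, for every way a new tadpole arises (the $\sigma$-derivative hitting $e^V$, an adjacent resolvent, or another explicit $\sigma$), its bare $T$ or $T_{j'}$ part with a counterterm piece. For example, $\partial_{t_{j'}}[(R-1)C(t)](y,y)=[\partial_{t_{j'}}R\,C](y,y)+[(R-1)C_{j'}](y,y)$, where the $-T_{j'}$ is what renormalizes the newly marked tadpole. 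The $T_{j'}$-tadpoles produced when contracting the $\sigma$ in $\partial_{t_{j'}}R$ are cancelled by contracting $2i\sqrt\lambda T_{j'}\Tr\sigma$ onto the existing resolvents.

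Your ``main obstacle'' is not one. Each integration by parts removes at least one explicit $\sigma$ and creates exactly one $\sigma$-propagator, and neither $\delta V/\delta\sigma$ nor $\delta R/\delta\sigma$ contains an explicit $\sigma$, so order $\le|\mJ|$ is automatic.

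Doing all derivatives before any contraction is also not harmless bookkeeping. It produces, for instance, bare $6\lambda T_jT_{j'}$ terms. Their cancellation requires the mutual contraction inside $-2\lambda\Tr[RC_{j'}\sigma RC_j\sigma]$ and the product of two $T$-counterterm insertions. The slice-by-slice order the paper describes is what makes the compensation local.
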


See \cite{RiZh2} for further details, in particular the proof of the above-mentioned Lemma.

\subsection{Results of  \texorpdfstring{\cite{RiZh2}}{RiZh2}}

We follow the same steps as in \cite{MLVE} 
and obtain a two-level-jungle formula \cite{AR1}. It writes
\bee
Z^{j_{max}}(\lambda)  =  \sum_{n=0}^\infty \frac{1}{n!}  \sum_{\cJ} \;
 \;  \int dw_\cJ  \;  \int d\nu_{ \cJ}  
\quad   \partial_\cJ   \Big[ \prod_{\cB} \prod_{a\in \cB}      W_{a}   (   \sigma_a , \chi^{ \cB } , \bar \chi^{\cB}  )
  \Big] \; ,
\ee
where
\begin{itemize}

\item the sum over $\cJ$ runs over all two-level jungles,
hence over all ordered pairs $\cJ $ of two (each possibly empty) 
disjoint forests on $\cW$, such that $\cF_B$ is a forest, $\cF_F$ is a forest and
$\cJ = \cF_B \cup \cF_F $ is still a forest on $\cW$. 
The forests $\cF_B$ and $\cF_F$ are the Bosonic and Fermionic components of $\cJ$.
 
\item  $\int dw_\cJ$ means integration from 0 to 1 over parameters $w_\ell$, one for each edge $\ell \in \cJ$, namely
$\int dw_{\ \cJ } = \prod_{\ell\in  \cJ}  \int_0^1 dw_\ell  $.
There is no integration for the empty forest since, by convention, an empty product is 1. A generic integration point $w_\cJ$
is therefore made of $\vert  \cJ \vert$ parameters $w_\ell \in [0,1]$, one for each $\ell \in  \cJ$.

\item 
\bea \partial_\cJ  &=& \prod_{\genfrac{}{}{0pt}{}{\ell_B \in \cF_B}{\ell_B=(a,b)}} \Bigl(
\frac{\partial}{\partial \sigma_a}\frac{\partial}{\partial \sigma_b} \Bigr)
\prod_{\genfrac{}{}{0pt}{}{\ell_F \in \cF_F}{\ell_F=(d,e) } } \sum_{j_{\ell_F} =1}^{j_{max}} \Big(
   \frac{\partial}{\partial \bar \chi^{\cB(d)}_{j_{\ell_F} } }\frac{\partial}{\partial \chi^{\cB(e)}_{j_{\ell_F} } } + 
    \frac{\partial}{\partial \bar \chi^{ \cB( e) }_{j_{\ell_F} } } \frac{\partial}{\partial \chi^{\cB(d) }_{j_{\ell_F}  } }
   \Big) \nonumber\\&&
\ea
where $ \cB(d)$ denotes the Bosonic block to which the vertex $d$ belongs. 

\item The measure $d\nu_{\cJ}$ has covariance $ X (w_{\ell_B}) $ on Bosonic variables and $ Y (w_{\ell_F}) \otimes \mathbb{I}_\cS  $  
on Fermionic variables, hence
\bea\label{EQ32}
\int d\nu_{\cJ} F &=& \biggl[
e^{\frac{1}{2} \sum_{a,b=1}^n X_{ab}(w_{\ell_B })  \frac{\partial}{\partial \sigma_a}\frac{\partial}{\partial \sigma_b} }
 \\ \nonumber&& \hskip1cm e^{ \sum_{\cB,\cB'} Y_{\cB\cB'}(w_{\ell_F})  \sum_{j_{\cB, \cB'} \in \cS}
   \frac{\partial}{\partial \bar \chi_{j_{\cB, \cB'}}^{\cB} } \frac{\partial}{\partial \chi_{j_{\cB, \cB'}}^{\cB'} }  }  F  \biggr]_{\sigma = \bar\chi =\chi =0}\; .
\ea

\item  $X_{ab} (w_{\ell_B} )$  is the infimum of the $w_{\ell_B}$ parameters for all the Bosonic edges $\ell_B$
in the unique path $P^{\cF_B}_{a \to b}$ from $a$ to $b$ in $\cF_B$. This infimum is set to zero if such a path does not exist and 
to $1$ if $a=b$. 

\item  $Y_{\cB\cB'}(w_{\ell_F})$  is the infimum of the $w_{\ell_F}$ parameters for all the Fermionic
edges $\ell_F$ in any of the paths $P^{\cF_B \cup \cF_F}_{a\to b}$ from some vertex $a\in \cB$ to some vertex $b\in \cB'$. 
This infimum is set to $0$ if there are no such paths, and to $1$ if such paths exist but do not contain any Fermionic edges.

\end{itemize}

Remember that a main property of the forest formula is that the symmetric $n$ by $n$ matrix $X_{ab}(w_{\ell_B})$ 
is positive for any value of $w_\cJ$, hence the Gaussian measure $d\nu_{\cJ} $ is well-defined. 

Since the slice assignments, the fields, the measure and the integrand are now 
factorized over the connected components of $\cJ$, the logarithm of $Z$ is easily computed as exactly the same sum but restricted 
to two-level spanning trees:
\bea \label{treerep1}  
\log Z^{j_{max}}(\lambda)\!&=&\!  \sum_{n=1}^\infty \frac{1}{n!}  \sum_{\bar \cJ }   \int dw_{\bar\cJ }   \int d\nu_{\bar\cJ } 
\,  \partial_{\bar\cJ }   \Big[ \prod_{\cB} \prod_{a\in \cB}   \Bigl(     W_{a}   (   \sigma_a , \chi^{ \cB } , \bar \chi^{\cB}  )\Bigr)    \Big] , 
\nonumber\\&&
\ea
where the sum is the same but conditioned on $\bar \cJ = \cF_B \cup \cF_F$ being a \emph{spanning tree} on $\cW= [1, \cdots , n]$.
The main result is the convergence of this representation uniformly in $j_{max}$ for $\lambda$ in a certain domain, 
allowing the ultraviolet limit of the theory to be performed. 

More precisely in \cite{RiZh2} the following Theorem is proven:

\begin{theorem} \label{thetheorem} Fix  $\rho >0$ small enough.
The series \eqref{treerep1}
is absolutely convergent, uniformly in $j_{max}$, for $\lambda$ in the small open cardioid domain 
defined by $\vert \lambda \vert < \rho \cos^2 [({\rm Arg} \; \lambda )/2]$
(see Figure \ref{cardio}).
Its ultraviolet limit $\log Z (\lambda) = \lim_{j_{max}  \to \infty}  \log Z^{j_{max}}(\lambda)$ is therefore well-defined and
analytic in that cardioid domain; furthermore it is the Borel sum
of its perturbative series in powers of $\lambda$. 
\end{theorem}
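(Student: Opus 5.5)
We follow the strategy of \cite{RiZh2}, which in turn adapts the multiscale loop vertex expansion of \cite{MLVE}. The first step is to bound each term of the two-level tree representation \eqref{treerep1}. We will show that the amplitude of a two-level spanning tree $\bar\cJ$ on $n$ blocks is bounded by $(K\vert\lambda\vert)^{n}$ times a factor whose sum over all trees and slice attributions is at most $K^n n!$. The $1/n!$ in \eqref{treerep1} then leaves a geometric series, which converges for $\vert\lambda\vert$ small, uniformly in $j_{max}$.

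To bound a fixed term, we apply the Cauchy--Schwarz inequality with respect to the positive Gaussian measure $d\nu_{\bar\cJ}$. This separates the product of resolvents and renormalized $(R-1)C$ factors from the non-polynomial factor $e^{V(t)}$. For $\lambda$ in the cardioid $\vert\lambda\vert<\rho\cos^2[({\rm Arg}\,\lambda)/2]$, the resolvent $R(\sigma)=[1+2i\sqrt\lambda C\sigma]^{-1}$ has norm bounded by a constant $\le 1/\cos({\rm Arg}\,\lambda/2)$, since $C^{1/2}\sigma C^{1/2}$ is self-adjoint. Hence each resolvent costs only a constant factor. The $\sigma$-derivatives $\partial_{\bar\cJ}$ produce at most $n$ additional resolvents per vertex, and the tree has only $n-1$ edges, so this produces an $O(1)^n$ factor. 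Each Bosonic edge carries a $\lambda$, and each Fermionic edge between blocks carries a slice-summed propagator. The spatial integrations are controlled by the decay of Lemma \ref{Lem1}: $C_j\le e^{-cM^j\vert x-y\vert}$ gives an integral of order $M^{-2j}$ per integration, compensated by the prefactor.

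The main obstacle is the other Cauchy--Schwarz factor, namely the exponential $\int d\nu\,\vert e^{2V(t)}\vert$. This is where Nelson's divergent bound \eqref{nelbound}, $e^{O(\vert\lambda\vert) j_{max}^2}$, threatens uniformity. We plan to rely on the slice-testing expansion of Lemma \ref{testinglemma}. In each block, the slices $j\notin\mJ$ have $t_j=0$, so $V(t)$ only involves the tested slices. We then bound $\vert e^{V(t)}\vert\le e^{K\vert\lambda\vert(\sum_{j\in\mJ}\sqrt{j})^2}$ by a quadratic-form estimate on $\Tr\log_2$ together with the Wick-ordering counterterm $2i\sqrt\lambda T\sigma$, which cancels the linear term. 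Every marked slice $j\in\mJ$ contributes a renormalized factor: either $(R-1)C_j$ on a tadpole or a non-tadpole $RC_j$. After one Wick contraction of the next $\sigma$, this gives a gain $M^{-j/2}$, or $M^{-\epsilon j}$ after using power counting. The set $\mJ$ has $\vert\mJ\vert=p$ elements, and $\prod_{j\in\mJ}M^{-j/2}e^{K\vert\lambda\vert p\sqrt j}$ is summable in $\mJ$ uniformly in $j_{max}$, because $\sum_{j}M^{-j/2}$ converges and $\vert\lambda\vert p\sqrt j\le \vert\lambda\vert(\sum \sqrt{j})^2/\ldots$ is dominated by the decay. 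Making this compensation precise, in particular checking that minimal graphs have order at most $\vert\mJ\vert$ so that the $\lambda$-powers are not overcounted, is the delicate point. Here we would follow the factorization argument of \cite{MLVE} verbatim.

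Uniform absolute convergence makes the limit $j_{max}\to\infty$ exist termwise, and the limit is analytic in the cardioid as a uniform limit of analytic functions. For Borel summability we would verify the Nevanlinna--Sokal criterion. Taylor-expanding each tree amplitude to order $N$ in $\lambda$, via a further expansion of the resolvents, gives a remainder bounded by $K^N N!\vert\lambda\vert^{N}$ uniformly in the cardioid, because each extra order creates at most one new tree edge and a factorial combinatorics. The cardioid is exactly the image of Sokal's disk $\{{\rm Re}\,\lambda^{-1}>1/\rho\}$ condition relaxed to $\vert\lambda\vert<\rho\cos^2$, which completes the argument.
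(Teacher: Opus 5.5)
Your outline follows the general architecture of \cite{RiZh2}, to which the paper simply defers: slice testing, Cauchy--Schwarz separating $e^{V(t)}$ from the renormalized resolvent amplitudes, the resolvent bound $1/\cos[({\rm Arg}\,\lambda)/2]$, and Nevanlinna--Sokal. But the one step specific to $\phi^4_2$, taming Nelson's bound, fails as you wrote it. There are two problems.

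First, the estimate $\vert e^{V(t)}\vert\le e^{K\vert\lambda\vert(\sum_{j\in\mJ}\sqrt j)^2}$ has no source. With $M=e^{2\pi}$ every slice has $T_j=1$ (see \eqref{e10}). So once $t_j=0$ for $j\notin\mJ$, you have $T(t)\le\vert\mJ\vert$. The non-perturbative factor is then Nelson's bound \eqref{nelbound} with $j_{max}$ replaced by $\vert\mJ\vert$, i.e.\ of the form $e^{O(\rho)\vert\mJ\vert^2}$, independent of \emph{which} slices are tested. Your version could not be compensated in any case. For $\mJ$ the top $p$ slices it is roughly $e^{K\vert\lambda\vert p^2 j_{max}}$, against a gain $\prod_{j\in\mJ}M^{-j/2}\approx e^{-\pi p\, j_{max}}$. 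The product diverges as $j_{max}\to\infty$ as soon as $K\vert\lambda\vert p>\pi$.

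Second, even with the correct bound, your reason for summability does not work. Convergence of $\sum_j M^{-j/2}$ only yields a geometric factor in $\vert\mJ\vert$, and that cannot beat a growth $e^{c\vert\mJ\vert^2}$ that is quadratic in $\vert\mJ\vert$. The missing idea is that the tested slices are \emph{distinct}. Hence $\sum_{j\in\mJ}j\ge\vert\mJ\vert(\vert\mJ\vert+1)/2$, and so $\prod_{j\in\mJ}M^{-j/2}\le M^{-\vert\mJ\vert(\vert\mJ\vert+1)/4}$. This decay is itself Gaussian in $\vert\mJ\vert$. A fraction of it beats $e^{O(\rho)\vert\mJ\vert^2}$ once $\rho$ is small compared with $\log M$, and the remaining fraction pays for the sum over $\mJ$ and the tree combinatorics. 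This must be organised block by block after Cauchy--Schwarz, using the Fermionic (hard-core) part of the two-level jungle to keep the slices carried by each block distinct. This compensation is absent from \cite{MLVE}, whose toy model has a positive action and no Nelson factor, so the factorization argument there cannot be used verbatim.

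A smaller inaccuracy concerns the cardioid. It is not the image of Sokal's disk; it is the region ${\rm Re}(\lambda^{-1/2})>\rho^{-1/2}$, which is natural for resolvent bounds of the form $\sqrt{\vert\lambda\vert}/\cos[({\rm Arg}\,\lambda)/2]$. It \emph{contains} the disk ${\rm Re}(\lambda^{-1})>\rho^{-1}$, which is all that Theorem~\ref{BorLeRSum} needs.
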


\begin{figure}[!t]
\begin{center}
\begin{tikzpicture}[scale=2]
\tikzstyle{red2}=[thick,  fill=red!40]
\draw (-0.5,-0.3) node {$O$} ;
\draw (2,0) node[above right]{$\rho=2R$} ;
\draw (0,1) node[above left]{$\rho/2$} ;
\filldraw [draw=green,fill=orange!60,domain=-180:180,samples=400,scale=.5,variable=\t] plot (\t:  2+ 2*cos \t);
\draw (-0,0) node {$\bullet$} ;
\draw (2,0) node {$\bullet$} ;
\draw (0,1) node {$\bullet$} ;
\draw[->] (-2,0) -- (3,0);
\draw (3,0) node[right] {$x$};
\draw [->] (0,-2) -- (0,2.5);
\draw (0,2.5) node[above] {$y$}; 
\draw[red] (1,0) circle (1) ;
\draw[red] (1,0) node {$\bullet$} ;
\draw[red] (1,0) node[above]{$R$} ;
\draw[red,->] (0,0) -- (1,0);
\draw[red,->] (1,0) -- (1.7071,0.7071);
\end{tikzpicture}
\end{center}
\caption{Open cardioid domain pictured in orange, frontier pictured in green. In red, $D_R$ for $2R=\rho$, see Appendix 
B.}
\label{cardio}
\end{figure}

See \cite{RiZh2} for further details, in particular the proof of the above-mentioned Theorem.
 
\section{The MLVE with Cumulants}
\label{mainresult}

\begin{definition}\label{def3}
Let us define a \emph{resolvent} by a multi-slice operator 
\bea
C_{\le j_{max}}(\sigma_r, x, y ), \quad C_{\le j_{max}}(\sigma_r, x, y) = D_{\le j_{max} } \frac{1}{1+i H_{\le j_{max} }  } D_{\le j_{max} }  ,
\label{eq33}
\ea
with the same notations than in Section \ref{SinSlice}. Let us define  $\cV^r$ by 
\bea
\cV^r&:=&  - \frac 12 \Tr\log_2 (1+iH_{\le j_{max} } ).  \label{EQ33}
\ea
\end{definition}

In the intermediate field representation, the cumulant of even order, by analogy with \cite{MagRiv}, Theorem 4.1, which is in this paper Theorem \ref{cumulants}, is
\bea 
\mathfrak{K}^{2\fm}_{j_{\max}}(\lambda, x_1, ..., x_{2\fm})
&=& \sum_{n=1}^{\infty}\frac{1}{n!} \sum_\cJ \Big\{ \prod_{\ell\in \cJ}   
\big[ \int_0^1 dw_\ell \int d^2 x_\ell d^2 y_\ell \big] \Big\} \nonumber \\
&&\hskip-4.5cm \bigg\{ \prod_{r=1}^{\fm} \int  d\nu_\cJ (\{\sigma_v\}, \{\sigma_r\}, \{ w \}) 
 \prod_{\ell\in \cJ} \big[ \delta (x_\ell - y_\ell) 
 \frac{\delta}{\delta \sigma_{v(\ell)}(x_\ell)}\frac{\delta}{\delta \sigma_{v'(\ell)}(y_\ell)} 
 \big]  \nonumber \\
 && \prod_v \cV^r_v \sum_{\pi}  C_{\le j_{max}}
 (\sigma_{r}, x_{\pi(r,1)}, x_{\pi(r,2)})\bigg\}\; , \label{eq32}
\ea
with the same notations than in Theorem \ref{cumulants} and Definition \ref{def3}.

\subsection{MLVE Amplitudes}\footnote{This section is essentially a condensed introduction of \cite{GuKra}.}

The perturbation theory in terms of $\sigma$ is indexed by intermediate field Feynman graphs (see Figure \ref{graphinter6}) whose vertices are the loops obtained by the expansion \eqref{loops} into traces,
and whose $\sigma$-propagators, represented by red lines in Figure \ref{graphinter6}, correspond to the former $\phi^4$ \emph{vertices} of ordinary perturbation expansion, hence bear a coupling constant $\lambda$. The loop vertices are themselves cycles of the old $\phi^4$
propagators, which now occur at each \emph{corner} of the loop vertices. Remark that such intermediate field
Feynman graphs are really combinatorial maps \cite{GRS}. This means that we can define a clockwise
cyclic ordering at each loop vertex. The notion of the \emph{next} intermediate $\sigma$ field (or $\sigma$ half-propagator) at any propagator is then well-defined. We call these corner $\phi^4$ propagators simply \emph{c-propagators} for short.
The perturbative order of an intermediate graph is the total number of $\sigma$-propagators. 
Adding sources would introduce, in addition to the loop vertices, resolvents, which can be considered as ciliated loop vertices
\cite{Gurau:2013pca}.

For the task of computing $\mathfrak{K}^{2\fm}_{j_{max}}(\lambda,x_1,\cdots,x_{2\fm}) $, we introduce multi-scale combinatorial maps, a refinement of the usual Feynman graphs.
First, we define a combinatorial map with the same definition as in \cite{GuKra}. A combinatorial map is a graph with a distinguished cyclic ordering of the half edges incident at each vertex. 

Combinatorial maps are conveniently represented as \emph{ribbon graphs} whose vertices are disks and whose edges are ribbons (allowing one to encode the ordering of the half edges incident at a vertex). When applied to cumulants, it is based on combinatorial maps with cilia. 
A \emph{cilium} is a half edge hooked to a vertex.

We denote $\fm(G)$, $v(G)$, $\cV(G)$, $e(G)$, $f(G)$ and $c(G)$ the cilia,
vertices, c-propagators, edges, faces and \emph{corners} of $G$, and their \emph{numbers} $|\fm(G)|$, $|v(G)|$, $|\cV(G)|$, $|e(G)|$, $|f(G)|$,  $|c(G)|$. A corner of $G$ is a pair of consecutive half-edges attached to the same vertex. 

The edges of $G$ not in $\cJ$ are called \emph{loop edges} and we denote $ e(G,\cJ)= e(G)- e(\cJ)$ the set of 
loop edges. It depends on $\cJ$; for instance, there is no way to picture them in Figure  \ref{graphinter6}.

The faces of $G$ are partitioned between the faces which do not contain any cilium (which we sometimes call internal faces) 
and the ones which contain at least one cilium, which we call \emph{broken faces}.  We denote $b(G)$ as the set of broken faces of $G$. 
Each broken face corresponds to a puncture in the Riemann surface in which $G$ is embedded; see Figure \ref{Fig3a}.
\begin{figure}[!htb]\centering

\begin{tikzpicture}[scale=.8]

\filldraw[draw=black,fill=orange!40] plot (0,1) -- (0,3) -- plot (0,3) arc (-90:0:1)  -- plot (1,4) -- (3,4) -- plot (3,4) arc (-180:-135:1) 
 -- (0.7071,0.7071) -- plot (0.7071,0.7071) arc (45:90:1) -- cycle; 
 \filldraw[draw=black,fill=orange!40] 
plot (5,4) -- (7,4)  -- plot (7,4) arc (-180:-90:1)  -- plot (8,3) -- (8,1) -- plot (8,1) arc (90:135:1)  
 --  (7.2929,0.7071) -- plot  (4.7071,3.2929) arc (-45:0:1)  -- cycle; 
\draw [green,thick] (0,0) circle (1) ;
\draw [green,thick] (8,0) circle (1) ;
\draw [green,thick] (4,4) circle (1) ;
\draw [green,thick] (0,4) circle (1) ;
\draw [green,thick] (8,4) circle (1) ;
\draw [red](1,4) -- (3,4) ;\draw  [red] (1,4) node {$\bullet$} ;\draw  [red] (3,4) node {$\bullet$} ;
\draw [red](5,4) -- (7,4) ;\draw  [red] (5,4) node {$\bullet$} ;\draw  [red] (7,4) node {$\bullet$} ;
\draw [red](0,1) -- (0,3) ;\draw  [red] (0,1) node {$\bullet$} ;\draw  [red] (0,3) node {$\bullet$} ;
\draw [red](8,1) -- (8,3) ;\draw  [red] (8,1) node {$\bullet$} ;\draw  [red] (8,3) node {$\bullet$} ;

\draw [blue,ultra thick](1,2) -- (1.5,1.5) ;
\draw [blue,ultra thick](2,3) -- (2.5,2.5) ;
\draw [blue,ultra thick](6,2) -- (6.5,2.5) ;

\draw [red](0.7071,0.7071) -- (3.2929,3.2929) ;\draw [red] (0.7071,0.7071) node {$\bullet$} ;\draw  [red] (3.2929,3.2929) node {$\bullet$} ;
\draw [red](-0.7071,0.7071) -- (8.7071,-0.7071) ;\draw [red]  (-0.7071,0.7071) node {$\bullet$} ;\draw  [red] (8.7071,-0.7071) node {$\bullet$} ;
\draw [red](4,3) -- (4,5) ;\draw  [red] (4,3) node {$\bullet$} ;\draw  [red] (4,5) node {$\bullet$} ;
\draw [red](4.7071,3.2929) -- (7.2929,0.7071) ;\draw [red] (4.7071,3.2929)  node {$\bullet$} ;\draw  [red] (7.2929,0.7071) node {$\bullet$} ;
\draw [red] (7,0) -- (9,0) ;\draw [red]  (7,0) node {$\bullet$} ;\draw [red]  (9,0) node {$\bullet$} ;
\draw [red](.7071,-.7071) -- (7.2929,-.7071) ;\draw [red]  (.7071,-.7071) node {$\bullet$} ;\draw [red]  (7.2929,-.7071) 
node {$\bullet$} ;
\draw [red](-.7071,-.7071) -- (8.7071,0.7071) ;\draw [red]  (-.7071,-.7071) node {$\bullet$} ;\draw [red]  (8.7071,0.7071) 
node {$\bullet$} ;
\end{tikzpicture}
\caption{A LVE graph with five cycles in green, vertices and intermediate c-propagators in red,
three cilia in blue and two broken faces in orange, corresponding to a Feynman graph of order 
$\lambda^{10}$.} 
\label{Fig3a}
\end{figure}

The following Lemma is trivial if one is familiar of \cite{RiZh1,GuKra} (Lemma 4 of \cite{GuKra}):
\begin{lemma}
\label{Heppsectors}
For any vertex-labeled graph $G$:
\begin{equation}
\sum_{\genfrac{}{}{0pt}{}{\cJ\subset G}{ \cJ\, \text{ spanning tree} } }
\prod_{\ell\in \cV(\cJ)}\int_0^1 dw_{\ell}\, \left( \prod_{\ell=(i,j)\in e(G,\cJ)}
\inf_{\ell'\in P^\cJ_{i\leftrightarrow j}}w_{\ell'} \right) =1 \; .
\end{equation}
\end{lemma}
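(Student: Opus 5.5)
The plan is to derive the identity from the BKAR forest formula (see \cite{BK,AR1}) applied to a suitably chosen function, rather than to prove it by a direct combinatorial computation. Let $G$ have $n$ labelled vertices. Introduce symmetric interpolation parameters $x_{ij}\in[0,1]$ for all pairs $i<j$, and consider the function
\[
F(\{x_{ij}\}) = \prod_{\ell=(i,j)\in e(G)} x_{ij},
\]
with multiplicity if $G$ has multiple edges. If $G$ has self-loops (tadpoles at a vertex), those edges contribute the factor $1$, both in $F$ and in the statement, because the path from $i$ to $i$ is empty and the infimum is set to $1$. So we may assume there are none.

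The BKAR formula states that for any smooth $F$,
\[
F(1,\dots,1) = \sum_{\cF} \prod_{\ell\in\cF}\int_0^1 dw_\ell \;
\Bigl(\prod_{\ell\in\cF}\frac{\partial}{\partial x_\ell}\Bigr) F\bigl(x^{\cF}(w)\bigr),
\]
where the sum runs over forests $\cF$ on the complete graph over the $n$ vertices. Here $x^\cF_{ij}(w)$ is the infimum of $w_{\ell'}$ over the path $P^\cF_{i\leftrightarrow j}$ in $\cF$, and it equals $0$ if $i$ and $j$ lie in different components of $\cF$. The left-hand side is $F(1,\dots,1)=1$.

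Next I would identify which forests contribute. Take a forest $\cF$ containing a pair $(i,j)$ that is not an edge of $G$. Then $\partial_{x_{ij}}F=0$, so only forests whose edges are edges of $G$ survive. For multi-edges I would treat each edge of $G$ as carrying its own parameter; this is consistent with the sum over $\cJ\subset G$ in the statement. Now take a surviving forest $\cF$ that is not spanning, so that $G$ (assumed connected, as for the LVE graphs considered here) has an edge $\ell=(i,j)$ joining two different components of $\cF$. The factor $x_\ell$ survives the derivatives because $\ell\notin\cF$, and it is evaluated at $x^\cF_\ell=0$. Hence the term vanishes, and only spanning trees $\cJ\subset G$ remain.

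Finally, for a spanning tree $\cJ$, differentiating $F$ once in each $x_\ell$ with $\ell\in\cJ$ removes exactly those factors, since $F$ is multilinear. What remains is $\prod_{\ell\in e(G,\cJ)}x_\ell$ evaluated at $x^\cJ(w)$, which is exactly $\prod_{\ell=(i,j)\in e(G,\cJ)}\inf_{\ell'\in P^\cJ_{i\leftrightarrow j}}w_{\ell'}$. This gives the claimed identity.

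The main obstacle I expect is bookkeeping rather than analysis: stating cleanly how multiple edges and loop edges of the combinatorial map are assigned interpolation parameters, so that BKAR applies verbatim. I would handle this by applying the formula on the multigraph $G$ itself, with one parameter per edge. BKAR still holds in that setting because its proof only uses the ordered derivative and Taylor-interpolation structure along the edges of the graph.
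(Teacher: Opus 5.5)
Your argument is correct. You are also right that $G$ must be connected: the statement omits this, and for disconnected $G$ the sum is empty. The paper gives no proof of its own; it calls the lemma trivial and defers to Lemma 4 of \cite{GuKra} and to \cite{RiZh1}. The label \texttt{Heppsectors} and the pointer to \cite{RiZh1} suggest the other standard route, which does not use BKAR at all. There one puts a parameter $w_\ell\in[0,1]$ on every edge of $G$, loop edges included, and writes each factor $\inf_{\ell'\in P^\cJ_{i\leftrightarrow j}}w_{\ell'}$ as $\int_0^1 dw_\ell\,\mathbf{1}\{w_\ell<\inf_{\ell'\in P^\cJ_{i\leftrightarrow j}}w_{\ell'}\}$. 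The $\cJ$-term is then the volume of the set of $w\in[0,1]^{e(G)}$ in which every loop edge is lighter than all edges of the $\cJ$-path joining its ends. Equivalently, $\cJ$ is the maximal-weight spanning tree produced by Kruskal's greedy algorithm. For almost every $w$ (all weights distinct) this tree exists and is unique, so these sets partition the unit cube up to measure zero, and the sum is $1$. That proof is elementary and self-contained, and it handles parallel edges with no extra bookkeeping. It also gives each term a meaning: the probability that $\cJ$ is the leading tree for i.i.d.\ uniform edge weights, which is how \cite{RiZh1} explains the redistribution of Feynman graphs among trees. Your route is shorter once BKAR is granted and ties the identity directly to the tree formula used throughout the paper. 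It does, however, need a multigraph version of BKAR, and your one-line justification for that is loose. The clean fix is to keep the complete-graph BKAR with one variable $x_{ij}$ per pair and take $F=\prod_{i<j}x_{ij}^{k_{ij}}$, with $k_{ij}$ the multiplicity. Then $\partial_{x_{ij}}x_{ij}^{k_{ij}}=k_{ij}\,x_{ij}^{k_{ij}-1}$, and the factor $k_{ij}$ counts the choices of which parallel edge enters the spanning tree. The remaining $x_{ij}^{k_{ij}-1}$, evaluated at $w_{ij}$, is the product over the other parallel edges (now loop edges) of the infimum along the one-edge path, as required.
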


Next we bound the number of LVE graphs with a given number of vertices, loop edges, and cilia.

\begin{lemma}[Counting LVE trees]
\label{coutingtrees:lem}
The number of LVE trees with $v$ edges and  ${\mathcal K}$ cilia is
\bea
\label{EQ45.0}
{\cal N}(v,{\mathcal K})&=&\frac{(2v+{\mathcal K}-1)!\,(v+1)!}{(v+{\mathcal K})!\,(v+1-{\mathcal K})!\,{\mathcal K}!}\\
\label{EQ47.0}
&\le& 2^{3v+ {\mathcal K}}\,(v-1)!\,
\ea
\end{lemma}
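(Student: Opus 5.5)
The plan is to count labelled, ciliated plane trees by their degree sequence. I take an LVE tree with $v$ edges to be a combinatorial map which is a tree, with $v+1$ labelled loop vertices, of which ${\mathcal K}$ carry exactly one cilium (the resolvents). At most one cilium sits on a vertex, and it is inserted into one of the corners of that vertex. The count then factorizes into three independent choices:
\begin{itemize}
\item which vertices are ciliated;
\item the underlying labelled tree;
\item the cyclic orderings of the half-edges, together with the cilia, at each vertex.
\end{itemize}

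First, the ciliated vertices are chosen in $\binom{v+1}{{\mathcal K}}$ ways. Next, fix a degree sequence $(d_1,\dots,d_{v+1})$ with $d_i\ge 1$ and $\sum_i d_i = 2v$. By Cayley's formula the number of labelled trees with these degrees is $(v-1)!/\prod_i (d_i-1)!$.

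At an unciliated vertex the number of cyclic orderings of its $d_i$ half-edges is $(d_i-1)!$. At a ciliated vertex the cilium is one more element in the cycle, so there are $d_i!$ orderings. The factorials therefore cancel, and the number of maps with this degree sequence is $(v-1)!\prod_{i \text{ ciliated}} d_i$.

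It remains to sum over degree sequences, which I would do with generating functions. An unciliated vertex contributes $\sum_{d\ge1}x^d = x/(1-x)$, and a ciliated one contributes $\sum_{d\ge1} d\,x^d = x/(1-x)^2$. The required sum is therefore
\begin{equation*}
[x^{2v}]\,\frac{x^{v+1}}{(1-x)^{v+1+{\mathcal K}}} \;=\; \binom{2v+{\mathcal K}-1}{v+{\mathcal K}} .
\end{equation*}
Multiplying the three factors gives
\begin{equation*}
\binom{v+1}{{\mathcal K}}\,(v-1)!\,\binom{2v+{\mathcal K}-1}{v+{\mathcal K}} ,
\end{equation*}
and expanding the binomials reproduces exactly \eqref{EQ45.0}. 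As a sanity check, for ${\mathcal K}=1$ this gives $(2v)!/v! = (v+1)!\,C_v$, i.e. rooted plane trees (Catalan number $C_v$) with labelled vertices.

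For \eqref{EQ47.0} I would use the crude bound $\binom{n}{k}\le 2^n$ on both binomials. This gives $2^{v+1}\cdot 2^{2v+{\mathcal K}-1}(v-1)! = 2^{3v+{\mathcal K}}(v-1)!$.

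The main obstacle is not the computation but fixing the combinatorial conventions so that they match the LVE expansion \eqref{eq32}. Specifically, I must check that resolvents correspond to vertices carrying a single cilium placed in a corner, and that the relevant labelling is of vertices, which supplies the $(v+1)!$. I would verify these conventions against the small cases $v=1$ (with ${\mathcal K}=0,1,2$) before trusting the general formula.
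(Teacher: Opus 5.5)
Your proof is correct. For the inequality \eqref{EQ47.0} it coincides with the paper's, since both use $\binom{v+1}{{\mathcal K}}\le 2^{v+1}$ and $\binom{2v+{\mathcal K}-1}{v+{\mathcal K}}\le 2^{2v+{\mathcal K}-1}$.

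For the exact count \eqref{EQ45.0} your route differs. The paper gives no argument and simply cites Lemma 3 of \cite{GuKra}. You derive it from scratch:
\begin{itemize}
\item choose the ciliated vertices;
\item apply Cayley's degree-sequence formula $(v-1)!/\prod_i (d_i-1)!$;
\item note that the cyclic orderings cancel these factorials, leaving a factor $d_i$ at each ciliated vertex;
\item sum over degree sequences via $[x^{2v}]\,x^{v+1}(1-x)^{-(v+1+{\mathcal K})}=\binom{2v+{\mathcal K}-1}{v-1}$.
\end{itemize}
Each step checks out, the product of the three factors expands exactly to \eqref{EQ45.0}, and the small cases agree.

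What your route buys is self-containedness. It also makes explicit the convention the paper leaves implicit: $v+1$ labelled loop vertices, ${\mathcal K}$ of them resolvents carrying a single cilium placed in a corner. That this convention reproduces the cited formula exactly is a useful consistency check on the paper's use of it.

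One caveat, shared with the paper's statement: Cayley's formula with $d_i\ge 1$ and the factor $(v-1)!$ both require $v\ge 1$. The case $v=0$ (a single vertex) should be excluded or treated separately.
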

\prf The first part, namely \eqref{EQ45.0}, is already proved in \cite{GuKra}, Lemma 3.
Using the binomial bound
\bea
\frac{(2v+{\mathcal K}-1)!}{(v+{\mathcal K}!)(v-1)!} \le 2^{2v+{\mathcal K}-1},\quad
\frac{  (v+1)!}{  
   (v+1-{\mathcal K})!\,    {\mathcal K}!   }
\le 2^{ v+ 1 } ,
\ea
\eqref{EQ47.0} is proved.
\qed
\begin{lemma}[Counting LVE graphs]
The number of LVE graphs with $v+1$ vertices, $\ell$ loop edges and ${\mathcal K}$ cilia reads 
\bea\label{EQ55}
{\cal N}(v,\ell,{\mathcal K})&=&\frac{(2 v+2 \ell +{\mathcal K}-1)!(v+1)!}{(v+{\mathcal K})!2^{\ell}{\mathcal K}! 
  (v+1-{\mathcal K})!  } \; .
 \\\label{EQ56}&\le& 2^{3v   +\ell  +{\mathcal K}}          (v + 2\ell -1)! 
\ea
\end{lemma}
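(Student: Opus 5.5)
The plan mirrors the tree-counting proof: the exact count \eqref{EQ55} is taken from \cite{GuKra} (the graph version of Lemma 3 there), and the only new work is the upper bound \eqref{EQ56}.

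\emph{The exact count.} An LVE graph with $v+1$ vertices, $\ell$ loop edges and ${\mathcal K}$ cilia is obtained from an LVE tree by adding $\ell$ loop edges, i.e. $2\ell$ half-edges, in the cyclic orderings around the vertices. Inserting these $2\ell$ half-edges into the corners of the ciliated tree turns the factor $(2v+{\mathcal K}-1)!$ of \eqref{EQ45.0} into $(2v+2\ell+{\mathcal K}-1)!$. The $2\ell$ half-edges are then paired into $\ell$ edges. Dividing by $2^\ell$ accounts for the orientation of each loop edge; the ordering of the loop edges among themselves is already absorbed by the labelled insertion, as in \cite{GuKra}. I would cite this directly, since it is the combinatorial statement proved there.

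\emph{Reduction to two binomial bounds.} Write the ratio as a product
\[
\frac{(2v+2\ell+{\mathcal K}-1)!}{(v+{\mathcal K})!\,(v+2\ell-1)!}\cdot\frac{(v+1)!}{(v+1-{\mathcal K})!\,{\mathcal K}!}\cdot\frac{(v+2\ell-1)!}{2^\ell}.
\]
Each factor is then bounded separately.

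\emph{The two binomial factors.} The first factor is a binomial coefficient $\binom{2v+2\ell+{\mathcal K}-1}{v+{\mathcal K}}$, since $(v+{\mathcal K})+(v+2\ell-1)=2v+2\ell+{\mathcal K}-1$. It is therefore at most $2^{2v+2\ell+{\mathcal K}-1}$. The second factor is $\binom{v+1}{{\mathcal K}}\le 2^{v+1}$, exactly as in Lemma \ref{coutingtrees:lem}.

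\emph{Collecting the powers of $2$.} Multiplying the two bounds gives $2^{3v+2\ell+{\mathcal K}}$. Dividing by $2^\ell$ leaves $2^{3v+\ell+{\mathcal K}}(v+2\ell-1)!$, which is \eqref{EQ56}.

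The only delicate point is keeping the bookkeeping consistent, namely checking that the bottom entries sum to the top one so that the first factor really is a binomial coefficient. If the exponent comes out off by one, I would absorb the discrepancy with the harmless bound $2^{-1}\le 1$.
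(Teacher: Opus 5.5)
Your proposal is correct and is essentially the paper's proof: it takes the exact count \eqref{EQ55} from \cite{GuKra} (Lemma 10 there), and obtains \eqref{EQ56} from the same two binomial bounds, $\binom{2v+2\ell+{\mathcal K}-1}{v+{\mathcal K}}\le 2^{2v+2\ell+{\mathcal K}-1}$ and $\binom{v+1}{{\mathcal K}}\le 2^{v+1}$, followed by division by $2^{\ell}$. The exponent comes out to exactly $3v+\ell+{\mathcal K}$, so you do not need the fallback for an off-by-one discrepancy.
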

\prf
The first part of this lemma is proved in \cite{GuKra}, Lemma 10.
The second part is our own, using the bound 
\bea
\frac{(2v+ 2\ell +{\mathcal K} -1)!}{(  v +{\mathcal K}  )!  }
\le 2^{2v+2\ell   +{\mathcal K}   -1}   (v + 2\ell -1)!
\ea
and
\bea
\frac{  (v+1)!}{  (v+1-{\mathcal K})!\,    {\mathcal K}!     }
\le 2^{ v+ 1 } .
\ea
\qed

\subsection{Results}

Let us recall that in our article we worry only about cumulants of even order. This means that $|\fm(G)|$ has to be some even number, i.e. $|\fm(G)|=2\fm$.
Now we extend the amplitudes of \cite{RiZh2} with cilia.

\medskip
\begin{definition}\label{def4}
The resolvent amplitude $A_{G,\fm(G)}$ of an intermediate-field graph $G$ \emph{with slice attributions $\{j(\ell) \}$} is defined as
\bea  A_{G,\fm(G)} (t, \sigma) =  \prod_{v\in \cV(G)} \Big\{ (-\lambda) \int_{[0,1]^2}  d^2 x_v  \
\prod_{\ell \in CP(G)} [ R(\sigma)C_{j(\ell)}  ] (x_\ell, x'_\ell ) \Big\}
\label{ampres2}
\ea 
where $\cV(G)$ is the set of $\sigma$-propagators of $G$ and $CP(G)$ is the set of c-propagators of $G$. 
For  $R(\sigma)$ see subsection \ref{ss2.4} and in particular \eqref{ampres}. $x_\ell, x'_\ell$ are the positions of the two vertices
at the ends of the c-propagator $\ell$. 
Let us define   
\bea\label{E47}
{\mJ^\fm}=\{j^1,\cdots , j^\fm \}
\ea
for the \emph{vector} $\{j^1,\cdots , j^\fm \}$,
and explicitely mark  the dependence in $\lambda$ in $\cV^r$ by defining:
\bea 
\prod_v \cV^r_v := (-\lambda)^{v} \prod_v \cD^r_v .
\ea
\end{definition}
We expend $\mathfrak{K}^{2\fm}_{j_{\max}}(\lambda, x_1, ..., x_{2\fm})$ into resolvent un-renormalized amplitudes 
\bea
\mathfrak{K}^{2\fm}_{j_{\max}}(\lambda, x_1, ..., x_{2\fm}) &=&
 \sum_{A_{G,\fm(G)} }  A_{G,\fm(G)}(t, \sigma) \\ && \hskip-3.5cm =
 \sum_{n=1}^{\infty}\frac{1}{n!}   \sum_{\cJ}\prod_{\ell\in \cJ}    
\big[ \int_0^1 dw_\ell \int d^2 x_\ell d^2 y_\ell \big] (-\lambda )^v \prod_v \cD_v  \nonumber \\
&&\hskip-3.5cm\int  d\nu_\cJ (\{\sigma_v\}, \{\sigma_r\}, \{ w \}) 
\prod_{\ell\in \cJ} \big[ \delta (x_\ell - y_\ell) 
\frac{\delta}{\delta \sigma_{v(\ell)}(x_\ell)}\frac{\delta}{\delta \sigma_{v'(\ell)}(y_\ell)}  \big]  
\nonumber\\&& \hskip-1.5cm  \prod_{r=1}^{\fm}
 \bigg[  \sum_{\pi} 
\sum_{\mJ^{\fm}=\{1,\cdots,1\}}^{{\mJ^\fm}=\{j_{max},\cdots,j_{max}\}}
C_j (\sigma_{r}, x_{\pi(r,1)}, x_{\pi(r,2)})\bigg]\;  ,
\ea
where the notations are the same than in \eqref{treeformulext} and Definition \ref{def4}.
But the expansion over unrenormalized amplitudes is just a warm-up exercise and is no help to prove Theorem  \ref{THM1}. 
Now we expand $\mathfrak{K}^{2\fm}_{j_{\max}}(\lambda, x_1, ..., x_{2\fm})$ into \emph{renormalized} amplitudes, not an easy task because we have to accommodate Nelson's tadpoles.

\begin{definition}
The \emph{renormalized} amplitude of the same graph is the same, but with subtracted tadpole resolvents:
\bea \label{Mampresren}
A^R_{G,\fm(G)} (t, \sigma) &=&  \prod_{v \in \cV(G)} \bigl[ (-\lambda) \int_{[0,1]^2}  d^2 x_v  \bigr] 
\\&&\nonumber
\prod_{\genfrac{}{}{0pt}{}{\ell \in CP(G)}{\ell\; tadpole}}  
[(R -1)C_{j(\ell)} (\sigma)] (x_\ell, x'_\ell ) \prod_{\genfrac{}{}{0pt}{}{\ell \in CP(G)}{\ell \; not \; tadpole}}  [   R(\sigma)C_{j(\ell)}] (x_\ell, x'_\ell ).
\ea 
\end{definition}

Next we have to compute the cumulants with this new definition.
\bea
\mathfrak{K}^{2\fm}_{j_{\max}}(\lambda, x_1, ..., x_{2\fm}) &=&  
\sum_{A^R_{G,\fm(G)} } A^R_{G,\fm(G)}(t, \sigma)
\\&=&\Big[{\frac{\partial^{2\fm}}{\partial J(x_1)\cdots\partial J(x_{2\fm})}}\log{ Z(\lambda,j_{\max},J)} \Big]_{\{J\}=0}.
\label{EQ40}
\ea
It seems that we have to compute $\log{ Z(\lambda,j_{\max},J)}$.
Fortunately remark that we have to compute only $\mathfrak{K}^{2\fm}_{j_{\max}}(\lambda, x_1, ..., x_{2\fm})$ 
for $1\le \fm \le \fm_{max}$.
\bea\label{EQ43}
\mathfrak{K}^{2\fm}_{j_{\max}}(\lambda, x_1, ..., x_{2\fm}) &=&
\sum_{n=1}^{\infty}\frac{1}{n!}   (-\lambda)^v \prod_v \cD^r_v 
\bigg\{
\sum_{\bar \cJ_\fm}  \int dw_{\bar \cJ_\fm}  \int d\nu_{\bar \cJ_\fm} \partial_{\bar \cJ_\fm}
\\ &&  \nonumber\hskip-4cm 
\bigg[ \prod_{\cB} \prod_{a\in \cB}   \Bigl(     W_{a}   ( \sigma_a , \chi^{ \cB } , \bar \chi^{\cB}  )\Bigr)  \bigg] 
\prod_{r=1}^{\fm} \sum_{\pi} \sum_{\mJ^{\fm}=\{1,\cdots,1\}}^{{\mJ^\fm}=\{j_{max},\cdots,j_{max}\}}C_j (\sigma_{r}, x_{\pi(r,1)}, x_{\pi(r,2)})  \bigg\}\;  .
\ea
with the same definitions than   \eqref{eq33}-\eqref{EQ33}-\eqref{eq32}-\eqref{E47}, except that 
\begin{itemize}
\item \emph{the external faces are renormalized}, 
\item $d\nu_{\bar \cJ_\fm} (\{\sigma_v\}, \{\sigma_r\}, \{ w \}) $  has covariance  
\bea
<\sigma_\alpha,\sigma_{\alpha'}>=
\delta (x-y) w^{\bar \cJ_\fm}  (\alpha, \alpha', \{ w\})
\ea where $w^{\bar \cJ_\fm}  (\alpha, \alpha', \{ w\})$ is 1 if $\alpha=\alpha'$ \emph{where $\alpha, \alpha'\in \{v\},  \{\sigma_r\}$}
and is the infimum of the $w_\ell$ for $\ell$ running over the unique path from $\alpha$ to $\alpha'$ in ${\bar \cJ_\fm} $
if $\alpha\ne \alpha'$.
\end{itemize}
\medskip

Our main result extend those of \cite{GuKra,MLVE,RiZh2} to the \emph{cumulants of order $2\fm$ with $1\le\fm
\le \fm_{\max}
$}.
\begin{theorem}
\label{THM1} 
Fix $1\le \fm\le \fm_{\max}$. Let  $\lambda\in \mathbb{C}$ be in the domain 
\bea \label{T0}
|\lambda | \le \rho \cos^2 [({\rm Arg} \; \lambda )/2]
\ea
for $\rho >0$ being small enough depending on $ \fm_{\max}\ge 1 $. 
The series \eqref{cum1} is convergent in that domain, i.e.
\bea \label{T1}
| \mathfrak{K}^{2\fm}_{j_{\max}}(\lambda,x_1, ..., x_{2\fm}) | \le  \rho 
 \cos^2 [({\rm Arg} \; \lambda )/2]  .
\ea
Under the same conditions,  $\mathfrak{K}^{2\fm}_{j_{\max}}(\lambda,x_1, ..., x_{2\fm})$ 
is convergent \emph{uniformly in $j_{\max}$} and its ultraviolet limit 
\bea \label{T2}
\lim_{j_{max}  \to \infty} \mathfrak{K}^{2\fm}_{j_{\max}}(\lambda,x_1, ..., x_{2\fm})=\mathfrak{K}^{2\fm}(\lambda,x_1, ..., x_{2\fm})
\ea is therefore well-defined and analytic in that domain.
Furthermore, it is the Borel sum of its perturbative series in powers of $\lambda$.
\end{theorem}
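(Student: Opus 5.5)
The proof will follow the strategy of Theorem \ref{thetheorem} from \cite{RiZh2}, applied to the cumulant representation \eqref{EQ43} instead of the vacuum representation \eqref{treerep1}. The finite number $2\fm\le 2\fm_{\max}$ of cilia is treated as a perturbation of the vacuum argument, and its cost is absorbed into the choice of $\rho$.

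\textbf{Step 1 (slice-testing with sources).} Apply the slice-testing expansion of Section \ref{slitest} to the generating functional with sources. Each derivative $d/dt_j$ now hits one of three things: a c-propagator inside a loop vertex, a $T_j$ counterterm, or a c-propagator inside one of the $\fm$ resolvents $C_{\le j_{max}}(\sigma_r,\cdot,\cdot)$ carrying the external points. The analogue of Lemma \ref{testinglemma} should hold. The cumulant becomes a sum over minimal $\mJ$-resolvent graphs with $2\fm$ cilia, with renormalized amplitudes $A^R_{G,\fm(G)}$ as in \eqref{Mampresren}. Tadpoles are compensated by the $2i\sqrt\lambda T_j\sigma$ counterterm exactly as in \cite{RiZh2}. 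The new point to check is that tadpoles located on broken faces are still renormalized. This is the meaning of the clause ``the external faces are renormalized'' after \eqref{EQ43}.

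\textbf{Step 2 (two-level jungle and logarithm).} Perform the two-level jungle formula on the loop vertices together with the $\fm$ ciliated resolvents. The derivatives in $J$ force all ciliated objects into the same connected component. Taking the logarithm therefore restricts the sum to spanning trees $\bar\cJ_\fm$ containing the resolvents, which gives \eqref{EQ43}.

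\textbf{Step 3 (bounds).} Bound each term in three parts.
\begin{itemize}
\item \emph{Resolvents.} For $\lambda$ in the cardioid, $\|R(\sigma)\|\le \cos^{-1}[(\mathrm{Arg}\,\lambda)/2]$ uniformly in $\sigma$, since $C\sigma$ is self-adjoint in the appropriate sense. Each resolvent and loop vertex therefore costs at most this factor.
\item \emph{Nelson factor.} The bounded part of the exponential $e^{V(t)}$, namely the $e^{3\lambda T^2}$ divergence of \eqref{nelbound}, is paid by the marked propagators. Each marked slice $j$ yields $M^{-j/2}$ from the tadpole compensation, or $M^{-\epsilon j}$ from $(R-1)C_j$ and genuine non-tadpole propagators. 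This is the mechanism of \cite{RiZh2}.
\item \emph{External slices.} The external resolvents introduce sums over $\mJ^\fm$, with a factor $e^{4\pi\fm j}$ per slice as in Theorem \ref{Boundcumulants}. This factor is controlled by the decay $e^{-cM^j}$ of the lemma following Theorem \ref{Boundcumulants}. Summing over $j$ gives a constant $K(\fm_{\max})$ that is independent of $j_{max}$.
\end{itemize}

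\textbf{Step 4 (combinatorics and conclusion).} Tree combinatorics are controlled by Lemma \ref{coutingtrees:lem} and \eqref{EQ56}, which give $2^{3v+2\fm}(v-1)!$. The $1/n!$ and Lemma \ref{Heppsectors} absorb the factorials. What remains is a geometric series in $K(\fm_{\max})\,|\lambda|/\cos^2[(\mathrm{Arg}\,\lambda)/2]$, summable for $\rho$ small depending on $\fm_{\max}$; this gives \eqref{T1}.

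Uniformity in $j_{max}$ follows because every slice-dependent factor is summable. The differences between successive values of $j_{max}$ involve at least one propagator of slice $>j_{max}$, which yields a factor $M^{-\epsilon j_{max}}$. The sequence is therefore Cauchy and \eqref{T2} holds. Analyticity follows from uniform convergence of analytic functions. Borel summability follows from the Taylor remainder estimates of order $n$ being $K^n n!|\lambda|^{n}$, obtained by the same bounds, together with Nevanlinna–Sokal in the cardioid as in Theorem \ref{thetheorem}.

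\textbf{Main obstacle.} I expect Step 3 to be the hard part: proving that Nelson's divergent bound is still tamed when marked propagators sit inside external resolvents or on broken faces. A tadpole adjacent to a cilium has no counterterm partner in the ciliated resolvent. I would first try to show that such configurations are excluded by the Wick ordering of the source term, or that they carry an $(R-1)$ factor giving at least $\sqrt\lambda M^{-j/2}$. This should cost only a $\fm$-dependent constant, which is why $\rho$ must depend on $\fm_{\max}$.
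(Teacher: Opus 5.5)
There is a genuine gap in your Step 3 (``External slices''), and the conclusion of Step 4 depends on it. The lemma after Theorem \ref{Boundcumulants} that you invoke is not valid. From $d(x_1,\dots,x_{2\fm})\le 1$ one gets $e^{-cM^j d}\ge e^{-cM^j}$, which is the wrong direction. The only decay actually available is the $e^{-cM^j d(x_1,\dots,x_{2\fm})}$ of \eqref{decaybound}. Against the prefactor $e^{4\pi\fm j}=M^{2\fm j}$, the sum over $j\le j_{max}$ is of order $d^{-2\fm}$ and grows like $M^{2\fm j_{max}}$ at $d=0$. So you get no constant $K(\fm_{\max})$ that is uniform in $j_{max}$ and in the external points.

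This cannot be repaired by sharpening the estimate. For $\fm=1$, the term without loop vertices contains at order $\lambda^0$ the free covariance $C_{\le j_{max}}(x_1,x_2)$. It is independent of $\lambda$ and equals $T_{\le j_{max}}=j_{max}$ on the diagonal. This has three consequences:
\begin{itemize}
\item No geometric series in $|\lambda|/\cos^2[(\mathrm{Arg}\,\lambda)/2]$ can give \eqref{T1}, whose right-hand side tends to $0$ with $\rho$.
\item Your Cauchy argument in $j_{max}$ fails, since $C_{j_{max}+1}(x,x)=T_{j_{max}+1}=1$ carries no factor $M^{-\epsilon j_{max}}$.
\item The uniformity in ${\cal N}$ required by Theorem \ref{BorLeRSum} is lost.
\end{itemize}

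The real obstruction is therefore the ultraviolet behaviour of the external resolvents, not tadpoles next to cilia. Wick ordering removes tadpoles at every vertex, including vertices carrying external legs. The counterterm $2i\sqrt\lambda\,T\sigma$ contracts with the $\sigma$'s of ciliated resolvents exactly as it does with those of loop vertices. What is missing is a reformulation that makes the external part controllable. Possible options are to subtract the free two-point function, to smear the external points against test functions, or to allow bounds that depend on the external positions. You would then estimate the differentiated external resolvents through $L^2$ bounds such as $\sup_x\int d^2z\,C_{\le j_{max}}(x,z)^2<\infty$ (true in two dimensions), rather than through pointwise kernels.

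The paper's route does not supply this either. Instead of rerunning slice testing with cilia as you do, it factorizes the cumulant into two pieces. The first is the vacuum series $\mathfrak{L}$, controlled by Theorem \ref{thetheorem} used as a black box (Lemma \ref{lem6}). The second is an external factor bounded by $8^{\fm}$ (Lemma \ref{lem7}). The base case of Lemma \ref{lem7} asserts $|\sum_j C_j(\sigma_1,x_1,x_2)|=1$. That is precisely the uniform external bound your Step 3 needs, and it fails for the same reason.
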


The proof of this theorem is given in Section~\ref{sec4}. But
there is a complication that is not in \cite{GuKra,Riv1,MLVE,RiZh2}: the 
left parts of the cilia are joined by external faces $ef(G)$, and their number is $|ef(G)|$. Similarly, the right parts of the cilia are joined by external faces $ef'(G)$. To solve this difficulty, remark that their number must be equal:  $|ef(G)|=|ef'(G)|$.
In Theorem \ref{THM1} the cumulants are of order $2\fm$ with $1\le\fm\le \fm_{\max}$, and that is indeed a crucial point.
It means that the combinatorics of external faces is bounded. In fact we have 
\bea\label{E39}
C^{ef(G)}_{2\fm} =\frac{[2\fm]!}{[ef(G)] ![ef'(G)] !}  \le 2^{2\fm} =4^\fm \le 4^{\fm_{\max}} .
\ea

\section{Proof of Theorem \ref{THM1}}
\label{sec4}
We assume for now that our reader is familiar with \cite{RiZh2}.
Here we go.

\begin{definition}\label{def7}
\bea\label{eq61}
\hskip-.7cm\mathfrak{K}^{2\fm}_{j_{\max}}(\lambda,x_1, ..., x_{2\fm}) &=&   \sum_{\bar \cJ_\fm}  \int dw_{\bar \cJ_\fm} 
 \int d\nu_{\bar \cJ_\fm} \partial_{\bar \cJ_\fm}  \nonumber \\&&
\mathfrak{L}^{2\fm}_{j_{\max}}(\lambda,\sigma) \times \mathfrak{M}^{2\fm}_{j_{\max}}(x_1, ..., x_{2\fm}),
\\
\hskip-.7cm\mathfrak{L}_{j_{\max}}(\lambda,\sigma)&:=&\sum_{n=1}^{\infty}\frac{1}{n!}   (-\lambda)^{v}\prod_v \cD_v 
\prod_{\cB} \prod_{a\in \cB}   \Bigl(     W_{a}   ( \sigma_a , \chi^{ \cB } , \bar \chi^{\cB}  )\Bigr),\label{eq62}
\\
\hskip-.7cm\mathfrak{M}^{2\fm}_{j_{\max}}(x_1, ..., x_{2\fm})&:=&
\bigg[  \prod_{r=1}^{\fm}\sum_{\pi} \sum_{\mJ^{\fm}=\{1,\cdots,1\}}^{{\mJ^\fm}
=\{j_{max},\cdots,j_{max}\}}C_j (\sigma_{r}, x_{\pi(r,1)}, x_{\pi(r,2)})    \bigg]. 
\ea
\end{definition}

\begin{lemma}\label{lem6}
Fix $\rho$ as in Theorem \ref{thetheorem}.
In the cardioid defined by Figure \ref{cardio}, the series \eqref{eq62} is absolutely convergent. It is moreover uniformly convergent in $j_{\max}$. Its ultraviolet limit $\mathfrak{L}_{j_{\max}}(\lambda,\sigma)$ is therefore well-defined and
analytic in that cardioid domain.
\end{lemma}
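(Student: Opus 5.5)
The plan is to reduce Lemma \ref{lem6} to Theorem \ref{thetheorem}. The series \eqref{eq62} defining $\mathfrak{L}_{j_{\max}}(\lambda,\sigma)$ has the same structure as the two-level jungle expansion of $Z^{j_{max}}(\lambda)$ and of its logarithm \eqref{treerep1}. It carries the same factors $\frac{1}{n!}(-\lambda)^v$, the same loop vertices $\cD_v$, and the same product over Bosonic blocks of the weights $W_a(\sigma_a,\chi^{\cB},\bar\chi^{\cB})$. The cilia and their resolvents $C_j(\sigma_r,\cdot,\cdot)$ have been moved into the separate factor $\mathfrak{M}^{2\fm}_{j_{\max}}$. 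So $\mathfrak{L}$ contains no source dependence, and only the vacuum combinatorics of \cite{RiZh2} remains.

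The steps are as follows.

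First, I would fix the tree $\bar\cJ_\fm$ and the parameters $w$. I would then check that for every term of \eqref{eq62} the slice-testing expansion of Lemma \ref{testinglemma} applies unchanged. This gives renormalized resolvent amplitudes $A^R_G$ over minimal $\mJ$-resolvent graphs, with the tadpoles compensated by the counterterm $2i\sqrt{\lambda}T_j\sigma$ exactly as in Subsection \ref{slitest}.

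Second, I would bound each term as in \cite{RiZh2}:
\begin{itemize}
\item Each resolvent $R(\sigma)$ is bounded in norm by $\cos^{-1}[({\rm Arg}\,\lambda)/2]$ for $\lambda$ in the cardioid, since $C^{1/2}\sigma C^{1/2}$ is self-adjoint.
\item The marked propagators and renormalized tadpoles produce the convergent factors $M^{-j/2}$, which beat Nelson's bound \eqref{nelbound} and the sum over slice subsets $\mJ$.
\item The fermionic (Grassmann) integration enforces the hard-core constraint between blocks, so each slice is used by at most one block.
\end{itemize}

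Third, I would control the tree combinatorics using Lemma \ref{coutingtrees:lem}, which bounds the number of trees by $2^{3v+\fm}(v-1)!$. Together with the $\frac1{n!}$ and the $|\lambda|^v$, this gives a geometric series in $\rho$, convergent for $\rho$ small. The extra cilia cost only a factor $2^{\fm}\le 2^{\fm_{\max}}$, which is harmless since $\fm$ is fixed.

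Uniformity in $j_{\max}$ follows because every bound above is independent of $j_{\max}$. Analyticity follows because each term is analytic in the cardioid, and a uniformly convergent series of analytic functions is analytic. The ultraviolet limit then exists by the Cauchy criterion, since the tail over slices above $j$ is bounded by $\sum_{j'>j}M^{-j'/2}$ times a constant.

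The main obstacle is the first step: whether the factorization in Definition \ref{def7} is legitimate. The $\sigma_r$ fields of the resolvents share the measure $d\nu_{\bar\cJ_\fm}$ and the derivatives $\partial_{\bar\cJ_\fm}$ with the loop vertices. Derivatives landing on $\mathfrak{M}$ produce extra resolvents that are not part of the vacuum structure of \cite{RiZh2}. I would first try to show that these resolvent insertions are bounded in norm by the same cardioid factor, so that they contribute only an $\fm$-dependent constant. The estimate for $\mathfrak{L}$ then reduces to the bound of Theorem \ref{thetheorem}, applied uniformly in $\sigma$ through the norm bounds rather than after Gaussian integration.
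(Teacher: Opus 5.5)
The paper's entire proof is your opening sentence: it declares Lemma \ref{lem6} a corollary of Theorem \ref{thetheorem} and re-runs nothing. Everything you add beyond that breaks down at the point you flag, and the repair you propose cannot work.

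Theorem \ref{thetheorem}, and every estimate of \cite{RiZh2} behind it, concerns the series \emph{after} integration against $d\nu$. There is no version of it that holds uniformly in $\sigma$.
\begin{itemize}
\item The compensation between tadpoles and the counterterm $2i\sqrt{\lambda}T_j\sigma$, which produces the factors $M^{-j/2}$ taming Nelson's bound, is a Wick contraction, i.e.\ a Gaussian integration by parts.
\item The factor $e^{V(t)}$ is controlled only through a Cauchy--Schwarz inequality in $d\nu$.
\item Pointwise, $|e^{2i\sqrt{\lambda}\,\Tr(T(t)\sigma)}|$ is unbounded in $\sigma$ as soon as $\lambda\notin\R_+$.
\item For ${\rm Re}\,\lambda<0$, which the cardioid allows, $|e^{-\frac12\Tr\log_2[1+2i\sqrt{\lambda}C\sigma]}|$ can grow like $e^{c|\lambda|\Tr(C\sigma C\sigma)}$. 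Here $\Tr(C\sigma C\sigma)$ has Gaussian mean $T_{\le j_{\max}}^2=j_{\max}^2$.
\end{itemize}
So no bound on $\mathfrak{L}_{j_{\max}}(\lambda,\sigma)$ at fixed $\sigma$ can be uniform in $j_{\max}$. An ultraviolet limit at fixed $\sigma$ is not something \cite{RiZh2} controls.

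The same obstruction defeats your treatment of the cilia. The bound $\cos^{-1}[({\rm Arg}\,\lambda)/2]$ on the norm of the symmetrized resolvent gives, at fixed external points, only
$|C_{\le j_{\max}}(\sigma_r,x,y)|\le \cos^{-1}[({\rm Arg}\,\lambda)/2]\,C_{\le j_{\max}}(x,x)^{1/2}C_{\le j_{\max}}(y,y)^{1/2}=\cos^{-1}[({\rm Arg}\,\lambda)/2]\,j_{\max}$.
The same holds for the extra resolvents created when $\partial_{\bar\cJ_\fm}$ hits $\mathfrak{M}$. These insertions therefore do not cost an $\fm$-dependent constant uniformly in $j_{\max}$.

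Your first step is also mis-ordered. In \cite{RiZh2}, slice testing acts on $e^{V(t)}$ under $\int d\nu$ before the jungle formula creates the blocks and the $W_a$. It cannot be applied at fixed tree inside \eqref{eq62}.

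The missing idea is not to factorize at all. Keep the $\fm$ resolvents inside the integrated amplitude as ciliated loop vertices sharing $d\nu_{\bar\cJ_\fm}$ and $\partial_{\bar\cJ_\fm}$. Then redo the Cauchy--Schwarz and slice-testing estimates of \cite{RiZh2} in their presence, including tadpole renormalization along the broken faces.

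The paper's one-line citation does not supply this either, because \eqref{eq61} is not a genuine factorization. The tree sum depends on $n$, and $d\nu_{\bar\cJ_\fm}$ and $\partial_{\bar\cJ_\fm}$ act on both $\mathfrak{L}$ and $\mathfrak{M}$.
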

\prf
For the proof, see \cite{RiZh2}. It is simply a corollary of the main result of \cite{RiZh2}, which is Theorem \ref {thetheorem} in this article.
\qed

\begin{lemma}\label{lem7}
\bea\label{E64}
\Big|  \sum_{\bar \cJ_\fm}  \int dw_{\bar \cJ_\fm}  \int d\nu_{\bar \cJ_\fm} \partial_{\bar \cJ_\fm}\Big| 
\times \Big| \mathfrak{M}^{2\fm}_{j_{\max}}(x_1, ..., x_{2\fm})\Big|  &\le& 8^{\fm}.
\ea
\end{lemma}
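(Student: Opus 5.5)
The bound (\ref{E64}) splits into two factors, and I would bound them separately: a combinatorial factor $4^{\fm}$ from the pairings and external faces, and a factor $2^{\fm}$ from the resolvents together with the tree and Gaussian integrations. Throughout, the quantity inside the absolute value is the operator $\sum_{\bar\cJ_\fm}\int dw_{\bar\cJ_\fm}\int d\nu_{\bar\cJ_\fm}\partial_{\bar\cJ_\fm}$ acting on the external part only, that is, on $\mathfrak{M}^{2\fm}_{j_{\max}}$. The vacuum part $\mathfrak{L}$ has been split off in Definition \ref{def7} and is controlled by Lemma \ref{lem6}.

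\emph{Step 1: combinatorics of the pairings.} The sum over $\pi$ runs over pairings of the $2\fm$ points $x_1,\dots,x_{2\fm}$. Naively this gives $(2\fm-1)!!$ terms. However, the cilia are organized by the external faces $ef(G)$ and $ef'(G)$, with $|ef(G)|=|ef'(G)|$, so I would count the admissible configurations through (\ref{E39}). This gives at most $C^{ef(G)}_{2\fm}\le 4^{\fm}$ terms. All of the growth of the cumulant in $\fm$ is absorbed here, which is only possible because $\fm\le\fm_{\max}$ is fixed.

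\emph{Step 2: the resolvents.} For each $r$, I would bound the resolvent $C_{\le j_{\max}}(\sigma_r,x,y)=D\,(1+iH)^{-1}D$ uniformly in $\sigma_r$. The key input is that $H$ is Hermitian for real $\sigma$, so $\|(1+iH)^{-1}\|\le 1$. In the cardioid domain, the analogous estimate $\|R(\sigma)\|\le 2$ (after the rotation of the contour used in \cite{RiZh2}) would be imported together with the rest of the estimates of \cite{RiZh2}. The scale sum $\sum_{\mJ^\fm}$ remains. I would sum it using the decay of Lemma \ref{Lem1}, together with the single-slice bound of Theorem \ref{Boundcumulants} and the subsequent lemma (the factor $e^{-cM^j}$ beats the $e^{4\pi\fm j}$ growth only in combination with the renormalized external faces). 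The target is a factor at most $2$ per resolvent, hence $2^{\fm}$ overall.

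\emph{Step 3: the tree sum and Gaussian integration.} The derivatives $\partial_{\bar\cJ_\fm}$ that land on resolvents produce further resolvents, each bounded by the same norm estimate. The sum over trees weighted by $\int dw$ is normalized by Lemma \ref{Heppsectors}, which gives total weight $1$. The measure $d\nu_{\bar\cJ_\fm}$ is a normalized Gaussian measure with positive covariance, so integrating a uniformly bounded function costs nothing. Combining Steps 1--3 yields $4^{\fm}\cdot 2^{\fm}=8^{\fm}$.

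The main obstacle is Step 2. The uniform-in-$j_{\max}$ summation over the scale vector $\mJ^\fm$ together with the bound on the renormalized external resolvents does not follow from resolvent-norm bounds alone. The convergent factors from the marked, renormalized c-propagators must be borrowed from the slice-testing expansion of Lemma \ref{testinglemma}. I would try first to transpose the \cite{RiZh2} argument verbatim, treating each cilium as a marked c-propagator. The difficulty is that each $M^{-j/2}$ gain obtained this way must exactly compensate the corresponding scale sum.
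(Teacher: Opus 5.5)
Your Step 2 is not only the main obstacle, it is a step that fails, and Steps 1 and 3 cannot make up for it.

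\paragraph{Step 2 fails.} Under your reading (the tree operator acting on $\mathfrak{M}$ alone), take $\fm=1$ and $\lambda=0$, which satisfies \eqref{T0}. Then $R=1$, every $\sigma$-derivative of $\mathfrak{M}$ vanishes, only the edgeless tree survives, and the quantity is exactly $\sum_{j\le j_{\max}}C_j(x_1,x_2)$. By \eqref{e10} this equals $j_{\max}$ at $x_1=x_2$. It is of order $\log_M(1/|x_1-x_2|)$ for $M^{-j_{\max}}\lesssim|x_1-x_2|\ll 1$. So no bound of ``at most $2$ per resolvent'' can hold uniformly in $j_{\max}$ and in the external points.

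None of the ingredients you list changes this:
\begin{itemize}
\item The external c-propagators are not tadpoles. The $M^{-j/2}$ factors of the slice-testing expansion of \cite{RiZh2} come from compensating vacuum tadpoles against their counterterms, and each carries powers of $\lambda$. Treating cilia as marked c-propagators therefore produces no gain.
\item The lemma following Theorem~\ref{Boundcumulants} runs the wrong way. Since $d(x_1,\dots,x_{2\fm})\le 1$, one has $e^{-cM^jd}\ge e^{-cM^j}$, and there is no decay precisely when the points are close.
\item An operator-norm bound on $R$ does not control its kernel pointwise. In the cardioid that bound is $1/\cos(\arg\lambda/2)$, not $2$.
\end{itemize}

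The paper's proof does not supply the missing idea either. It is an induction on $\fm$ whose base case simply asserts $|\sum_j C_j(\sigma_1,x_1,x_2)|=1$, which is exactly the estimate you flag and fails at $\lambda=0$ for the reason above. Its inductive step only invokes \eqref{E39}, the counting lemmas and factorization over broken faces. A provable statement has to be weakened. For example, smear the $x_i$ against test functions (the logarithmic singularity is locally integrable in two dimensions), or allow growth in $\log(1/|x_{\pi(r,1)}-x_{\pi(r,2)}|)$.

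\paragraph{Steps 1 and 3 also fail as written.}
\begin{itemize}
\item In Step 1, \eqref{E39} counts the ways of splitting the cilia between left and right external faces. It does not bound the number of pairings $\pi$, which is $(2\fm-1)!!$. That number already exceeds $4^{\fm}$ at $\fm=6$, and $8^{\fm}$ at $\fm=11$.
\item In Step 3, Lemma~\ref{Heppsectors} is an identity over the spanning trees of one fixed graph $G$. It is not a normalization of the sum over LVE trees. Their number grows factorially (Lemma~\ref{coutingtrees:lem}) and is beaten only by the $1/n!$ together with the powers of $\lambda$. So the tree sum cannot be bounded independently of $\lambda$.
\item $\partial_{\bar\cJ_\fm}$ acts on the product $\mathfrak{L}\,\mathfrak{M}$ through the Leibniz rule, and $d\nu_{\bar\cJ_\fm}$ couples all the $\sigma$'s. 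You therefore cannot apply the tree operator to $\mathfrak{M}$ alone and dispose of the vacuum part separately via Lemma~\ref{lem6}.
\end{itemize}
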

\prf
We validate this lemma by induction. 
\begin{itemize}
\item{$ {\mathcal K}=1$, i.e. for a cumulant of order 2}
\bea\label{EQ45}
\Big|\sum_{\bar \cJ_1}  \int dw_{\bar \cJ_1}  \int d\nu_{\bar \cJ_1}\Big|
\times \Big| \mathfrak{M}^{2}_{j_{\max}}(x_1, x_{2})\Big|
\le \Big|\sum_{j=0}^{j_{max}} C_j (\sigma_{1}, x_{1}, x_{2})  \Big|=1.\label{EQ46}
\ea
For the proof, this means that
$d\nu_{\bar \cJ_1} (\{\sigma_v\}, \sigma_1, \{ w \}) $  has covariance  
\bea\label{EQ53}
<\sigma_\alpha,\sigma_{\alpha'}>=
\delta (x-y) w^{\bar \cJ_1}  (\alpha, \alpha', \{ w\})
\ea
 where $w^{\bar \cJ_1}  (\alpha, \alpha', \{ w\})$ is 1 if $\alpha=\alpha'$ (where $\alpha, \alpha'\in \{v\}, 1$),
and the infimum of the $w_\ell$ for $\ell$ running over the unique path from $\alpha$ to $\alpha'$ in ${\bar \cJ_1} $
if $\alpha\ne \alpha'$. Now  
it suffices to prove  that $x_1,x_2$ does not depend on $j_{max}$. When $\alpha, \alpha'\in \{v\}$, it results from  \cite{RiZh2} and the notations of Theorem \ref{thetheorem}. When $\alpha=\alpha' =1$, combine the fact that the external face is renormalized, i.e., $w^{\bar \cJ_1}$ is $1$. When $\alpha\neq \alpha' $ and $\alpha=1$ or $\alpha'=1$, it result from 
$w^{\bar \cJ_1}  (\alpha, \alpha', \{ w\})\le 1$, which is proved in \cite{{RiZh2}}.

\item{$ [{\mathcal K} \le s] \Rightarrow   [{\mathcal K}=s+1]$}
\end{itemize}

- Let us first treat the case of one broken face with $s+ 1$ cilia and the rest of unbroken faces. 
The number of the left part and the right part of the cilia 
must be equal no matter where the gluing of the left part and the right part of the cilia is, since the total number of the cilia is trapped by the single broken face. Therefore, the number of left half-parts of the cilia must be $s+1$,
and so must the right half-parts; the number of the right half-parts of the cilia must also be $s+1$. 
Simply combine the combinatorics of \eqref{EQ55}-\eqref{EQ56}
with the fact that there is the left part and the right part of the cilia in this model, see \eqref{E39}.
We arrive at 
\bea\label{E64.0}
\Big|  \sum_{\bar \cJ_\fm}  \int dw_{\bar \cJ_\fm}  \int d\nu_{\bar \cJ_\fm} \partial_{\bar \cJ_\fm}\Big| 
\times \Big| \mathfrak{M}^{2\fm}_{j_{\max}}(x_1, ..., x_{2\fm})\Big|  &\le& 8^{\fm}. 
\ea 
Then Lemma \ref{lem7} holds for this case, i.e. ${\mathcal K}=s+1$.

 - Finally, let us treat the more difficult case corresponding to one broken face with $1\le {s}_{1}\le  s$ cilia, the total number of the other cilia being $s+1-{s}_{1}$.  Then the cilia factorize between ${s}_{1}$ and the rest $s+1-{s}_{1}$. 
 We are taking into account the fact that the Lemma \ref{lem7} is already proven for $ {\mathcal K} \le s_1$ and $ {\mathcal K} \le {s+1-s_1}$, therefore:
\bea\label{E68}
&&\Big|  \sum_{\bar \cJ_{s_1}}  \int dw_{\bar \cJ_{s_1} } \int d\nu_{\bar \cJ_{s_1}}
\partial_{\bar \cJ_{s_1}}\Big| \times \Big| \mathfrak{M}^{2s_{j_{\max}}}(x_1, ..., x_{2s_1}) \Big| \le 8^{s_1} ,
\\&&\nonumber
\Big|  \sum_{\bar \cJ_{s+1-s_1}}  \int dw_{\bar \cJ_{s+1-s_1}}  \int d\nu_{\bar \cJ_{s+1-s_1}} \partial_{\bar \cJ_{s+1-s_1}}\Big| 
\\&&
\label{E69}
\hskip3cm \times \Big| \mathfrak{M}^{2{s+1-s_1}}_{j_{\max}}(x_1, ..., x_{2{s+1-s_1}})\Big| \le 8^{s+1-s_1} .
\ea
Therefore Lemma \ref{lem7} holds for this case, i.e. ${\mathcal K}=s+1$.
\qed

\begin{lemma}\label{lem9}
Fix $\rho$ as in Theorem \ref{thetheorem}. In the cardioid defined by Figure \ref{cardio1} 
the series  $\mathfrak{K}^{2\fm}_{j_{\max}}(\lambda,x_1, ..., x_{2\fm})$
 is absolutely convergent. It is moreover uniformly convergent in $j_{\max}$. Its ultraviolet limit 
 $\mathfrak{K}^{2\fm}(\lambda,x_1, ..., x_{2\fm})$ is therefore well-defined and
analytic in that cardioid domain.
\end{lemma}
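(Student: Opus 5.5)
The plan is to read Lemma \ref{lem9} directly off the factorization of Definition \ref{def7}, feeding in Lemma \ref{lem6} and Lemma \ref{lem7}. By \eqref{eq61}, $\mathfrak{K}^{2\fm}_{j_{\max}}$ is the two-level tree sum, with its $w$-integrations and Gaussian measure $d\nu_{\bar\cJ_\fm}$, applied to the product $\mathfrak{L}_{j_{\max}}(\lambda,\sigma)\,\mathfrak{M}^{2\fm}_{j_{\max}}(x_1,\dots,x_{2\fm})$. I will bound the absolute value of the general term in three steps.

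First, take the absolute value inside the sums and integrals. The $w$-integrals run over $[0,1]$ and $d\nu_{\bar\cJ_\fm}$ is a normalized positive Gaussian measure, because the matrix $w^{\bar\cJ_\fm}$ is positive. So the modulus of each term is at most the sup of the integrand's modulus. Lemma \ref{lem6} handles the $\lambda$-dependent factor $\mathfrak{L}_{j_{\max}}$: in the cardioid $|\lambda|<\rho\cos^2[({\rm Arg}\,\lambda)/2]$ it is absolutely and uniformly (in $j_{\max}$) convergent. This is inherited from Theorem \ref{thetheorem}, since the bosonic and fermionic blocks $W_a$ and the slice-testing factors are exactly those of \cite{RiZh2}. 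Lemma \ref{lem7} handles the combinatorics of the ciliated part: the tree-sum operator together with $\mathfrak{M}^{2\fm}_{j_{\max}}$ is bounded by $8^{\fm}\le 8^{\fm_{\max}}$, independently of $j_{\max}$. Multiplying the two bounds gives a summable majorant, of the form $8^{\fm_{\max}}$ times the convergent majorant series of \cite{RiZh2} for $\log Z$. This yields absolute convergence.

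Second, uniformity in $j_{\max}$ follows because neither bound depends on $j_{\max}$. The extra contributions when $j_{\max}\to j_{\max}+1$ carry marked c-propagators in the top slice. Through the slice-testing expansion of Lemma \ref{testinglemma}, these produce convergent factors $M^{-j/2}$, so the sequence $\mathfrak{K}^{2\fm}_{j_{\max}}$ is Cauchy and the limit exists. Third, analyticity: each term is a polynomial, or an analytic function of $\sqrt\lambda$ combined so as to depend only on $\lambda$, and is analytic in the open cardioid. A uniformly convergent series of analytic functions is analytic (Weierstrass). The uniform limit in $j_{\max}$ of analytic functions is again analytic, which gives $\mathfrak{K}^{2\fm}(\lambda,x_1,\dots,x_{2\fm})$.

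The main obstacle is justifying the factorization step itself. The resolvents $C_j(\sigma_r,\cdot,\cdot)$ in $\mathfrak{M}$ depend on $\sigma$, and the derivatives $\partial_{\bar\cJ_\fm}$ can hit both $\mathfrak{L}$ and $\mathfrak{M}$. The product bound therefore needs the norm estimate on resolvents, $\|R(\sigma)\|\le \cos^{-1}[({\rm Arg}\,\lambda)/2]$ in the cardioid. It also needs the fact that derivatives of resolvents only produce more resolvents and $C$'s, and that the cilia tadpoles are renormalized (external faces renormalized). I would first try to absorb these derivative terms into the counting of Lemma \ref{coutingtrees:lem} and \eqref{EQ56}, each derivative costing at most a factor $2$ and a resolvent norm. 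The resulting $\cos$ factors are then compensated by the $\cos^2$ in the definition of the cardioid, with $\rho$ shrunk depending on $\fm_{\max}$.
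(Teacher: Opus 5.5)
Your outline is exactly the paper's proof, which consists of the single sentence ``combine Lemma~\ref{lem6} and Lemma~\ref{lem7}'' applied to \eqref{eq61}. But the step you flag as the main obstacle is a genuine gap. Your Step~1 already depends on it, and neither your sketched repair nor the paper's one-liner closes it.

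Multiplying the two bounds does not give an inequality for the terms of \eqref{eq61}, for three reasons. First, $\partial_{\bar\cJ_\fm}$ acts on $\mathfrak{L}\,\mathfrak{M}$ by Leibniz, and derivatives of $\mathfrak{L}$ are not controlled by $\mathfrak{L}$. Indeed, the left side of \eqref{E64} is the modulus of an operator with nothing to act on. Second, $\bar\cJ_\fm$ is a tree on the $n$ loop vertices and $\fm$ resolvents, while $n$ is summed \emph{inside} $\mathfrak{L}$, so the tree sum cannot be pulled in front. There are at least $(n+\fm)^{n+\fm-2}$ such trees (cf.\ Lemma~\ref{coutingtrees:lem}), and only the $1/n!$ inside $\mathfrak{L}$ can pay for them. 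Hence no $n$-independent constant like $8^{\fm}$ can stand for the tree sum. Third, there is no sup over $\sigma$ to take. The factor $e^{V(t)}$ contains $e^{2i\sqrt{\lambda}\,\Tr T(t)\sigma}$, which is unbounded in $\sigma$ for $\lambda\notin[0,\infty)$. The $W_a$ contain the Grassmann fields $\chi,\bar\chi$. And Theorem~\ref{thetheorem} only bounds the Gaussian-integrated, differentiated tree terms, never $\mathfrak{L}$ pointwise.

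Your repair (each derivative costs a factor $2$ and a resolvent norm, and the $\cos$ factors go into the cardioid) is the single-scale, unrenormalised LVE bound. Here a $\sigma$-derivative, whether on a loop vertex or on an external resolvent of $\mathfrak{M}$, creates c-propagators $C(t)=\sum_j t_jC_j$. Their tadpoles grow like $j_{\max}$ unless compensated by the counterterm. Moreover, the naive bound on $e^{V(t)}$ is Nelson's divergent one, cf.\ \eqref{nelbound}. Handling this means redoing the slice-testing expansion of Lemma~\ref{testinglemma} and the jungle bounds behind Theorem~\ref{thetheorem} with the $\fm$ ciliated resolvents included, using the $M^{-j/2}$ gains of marked propagators. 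That is the real content of the lemma, and it is missing.

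Your Step~2 has a further concrete defect. The $\lambda^0$ term of $\mathfrak{K}^{2}_{j_{\max}}(\lambda,x_1,x_2)$ is $C_{\le j_{\max}}(x_1,x_2)$, which at $x_1=x_2$ equals $T_{\le j_{\max}}=j_{\max}$ by \eqref{e10}. So ``neither bound depends on $j_{\max}$'' fails at coinciding points; you inherit this claim from the base case of Lemma~\ref{lem7}, which asserts $|\sum_j C_j(\sigma_1,x_1,x_2)|=1$. Your Cauchy argument uses only the $M^{-j/2}$ gains of marked \emph{internal} propagators. It does not touch the external resolvents in $\mathfrak{M}$, whose top-slice contribution starts with $C_{j_{\max}+1}(x_{\pi(r,1)},x_{\pi(r,2)})$. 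That term is summable in $j_{\max}$ only through the decay $e^{-cM^{j}|x-y|}$ of Lemma~\ref{Lem1}, i.e.\ only for separated external points. A correct version must therefore restrict the arguments, e.g.\ to $\min_{i\ne k}|x_i-x_k|\ge\delta>0$ with constants depending on $\delta$, and use that decay explicitly on the external lines. Once you have a genuine term-by-term bound, geometric in $n$ and uniform in $j_{\max}$ and in $\lambda$ on compact subsets of the open cardioid, your Weierstrass argument for the analyticity of the limit goes through.
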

\prf
Simply combine Lemma \ref{lem6} and Lemma \ref{lem7}.
\qed

\begin{figure}[!ht]
\begin{center}
\begin{tikzpicture}[scale=2]
\tikzstyle{red2}=[thick,  fill=red!40]
\draw (-0.5,-0.3) node {$O$} ;
\draw (2,0) node[above right]{$\rho=8^\mathfrak{K}(2R)$} ;
\draw (0,1) node[above left]{$\rho/2=8^\mathfrak{K} R$} ;
\filldraw [draw=green,fill=orange!60,domain=-180:180,samples=400,scale=.5,variable=\t] plot (\t:  2+ 2*cos \t);
\draw (-0,0) node {$\bullet$} ;
\draw (2,0) node {$\bullet$} ;
\draw (0,1) node {$\bullet$} ;
\draw[->] (-2,0) -- (3.5,0);
\draw (3.5,0) node[right] {$x$};
\draw [->] (0,-2) -- (0,2);
\draw (0,2) node[above] {$y$}; 
\draw[red] (1,0) circle (1) ;
\draw[red] (1,0) node {$\bullet$} ;
\draw[red] (1,0) node[above]{$R$} ;
\draw[red,->] (0,0) -- (1,0);
\draw[red,->] (1,0) -- (1.7071,0.7071);
\draw[red] (1.5,-1.5) node[below]{$\cD_R=\cD_{8^\mathfrak{K}\rho/2}$} ;
\end{tikzpicture}
\end{center}
\caption{Open cardioid domain pictured in orange, frontier pictured in green. In red, $D_R$ for $R=8^\mathfrak{K}\rho/2$, see Appendix 
B.}
\label{cardio1}
\end{figure}

\medskip
Finally, for the proof of Theorem \ref{THM1},  all that remains is the proof of Borel summability.
For that, let us take $f(\lambda, {\cal N})$ as $\mathfrak{K}^{2\fm}(\lambda,x_1, ..., x_{2\fm})$ and 
apply Appendix B for  ${\cal N} \in
[0,1]^{2\fm}$ and $ \mathfrak{K}^{2\fm} (\lambda,x_1, ..., x_{2\fm})= f(\lambda,{\cal N} )$.

\qed

 \section{Outlook}

Where renormalization is present, we are interested
in a constructive Borel summability statement about the sources $J$ in \eqref{eqp1}.
It would be desirable to extend this article to include topological development and Weingarten calculus  \cite{Col}.

\medskip
In the more distant future, we are keen to explore the potential of providing additional models for cumulants, including the $\phi^4_3$ model \cite{GJ,MS},
$T^4_3$ model \cite{DR} or $T^4_4$ \cite{RV}  model, which are all super-renormalisable.

\medskip
Recently, the subject of cumulants in probability from a combinatorial point of view 
has attracted increasing interest; see \cite{EFKP,celestino2022cumulant}.
We think that this MLVE with cumulants (MLVEC) is interesting from that point of view.
We also think it can be extended, for example, to various groups such as the $O(N)$ and $Sp(N)$ groups.
We hope in addition that it can be used in renormalisable field models, for example  in $\phi^4_4$ and $T^4_5$ \cite{BenGeloun:2011rc,KopWan1,KopWan2,RV1}
and in the formalism of Ecalle's transseries \cite{C,BGKL}. 
For an recent introduction to the constructive theory, renormalisation and
main themes of this paper, 
we recommend a new article \cite{BeChGu}.

\section{Appendix A}\label{sec60}

No paper about the LVE should forget an appendix about the BKAR formula
\cite{BK,AR1} since this formula is crucial to the LVE.
Any quantity $F$ in quantum field theory which is an integral over a Gaussian complex measure 
can be combinatorially represented as a sum over the set $\cal F$.

\begin{lemma}[BKAR formula] \label{thm:BKAR}

Let $f$ be a smooth function of $n(n-1)/2$ line variables $x_\ell \in [0,1]$, $\ell = (i,j)$, $1 \le i< j\le n$. The forest formula states
\bee f(1,\cdots,1)= \sum_{\cal F}\big\{\prod_{\ell \in {\cal F}} [\int_0^1 dw_{\ell}] \big\}\big\{\prod_{\ell \in {\cal F}}\frac{\partial}{\partial x_{\ell}} f \big\}[X^{{\cal F}}(w_{{\cal F}})], \quad \text{where} \nonumber
\ee
\begin{itemize}
\item the sum over ${\cal F}$ is over all forests over $n$ vertices,

\item the ``weakening parameter" $X_{ij}^{{\cal F}}(w_{{\cal F}})]$ is $0$ if $i$ and $j$ dont belong to
 the same connected component of $\cal F$; otherwise it is the
 minimum of the $w_{\ell '}$ 
for $\ell'$ running over the unique path from $i$ to $j$ in $\cal F$.

\item Furthermore the real symmetric matrix $X_{ij}^{{\cal F}}(w_{{\cal F}})$ (completed by $1$ on the diagonal $i=j$)
is positive.
\end{itemize}
\end{lemma}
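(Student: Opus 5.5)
We prove the forest formula by an \emph{iterated interpolation}: we grow clusters one vertex at a time and use the fundamental theorem of calculus at each step. Positivity is then proved separately by decomposing $X^{\cal F}$ into partition matrices.

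Since both sides are linear in $f$, fix $f$ smooth on $[0,1]^{n(n-1)/2}$ and proceed by induction on $n$. The case $n=1$ is trivial.

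\textbf{Interpolation step.} Take the cluster $S=\{1\}$. Replace every line variable $x_{1j}$ crossing from $S$ to its complement by a common parameter $w$, and keep all other variables equal to $1$. Then
\[
f(1,\dots,1)=f\big|_{w=0}+\int_0^1 dw\sum_{j\neq 1}\partial_{x_{1j}}f\big|_{w}.
\]
The term at $w=0$ has vertex $1$ decoupled: all its lines are set to $0$. The induction hypothesis applied to the remaining $n-1$ vertices produces exactly the forests in which $1$ is an isolated vertex.

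In the derivative term, the line $(1,j)$ becomes a tree line and the cluster becomes $S'=\{1,j\}$. We iterate on $S'$. Every line crossing from $S'$ to its complement is now multiplied by a new parameter $w'\in[0,1]$, on top of the factor $w$ it already carries if it starts at $1$, and lines inside $S'$ keep their current value. Each step either closes the cluster (parameter $=0$, so the cluster is decoupled and we restart on the complement with its smallest vertex) or attaches a new vertex by a derivative.

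\textbf{Bookkeeping.} Unfolding the recursion gives a sum over forests ${\cal F}$, each tree being produced together with an ordering of its vertices in the order they were added. The interpolated value of a line $(i,j)$ is the product of the parameters introduced between the arrival of $i$ and the arrival of $j$. This is a product rather than a minimum, and the orderings overcount.

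The key identity I would establish is that, for a fixed tree, summing over its admissible growth orders and changing variables from the product parameters to ordered \emph{cumulative} parameters turns the products into $\min_{\ell'\in P_{i\to j}}w_{\ell'}$. It also turns the integrals into $\prod_{\ell\in{\cal F}}\int_0^1dw_\ell$ with Jacobian $1$. This is exactly the Kruskal/Hepp-sector decomposition: sectors are labelled by an ordering of the $w_\ell$, and within a sector the path minimum is the parameter of the path line with the lowest rank.

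I expect this combinatorial matching to be the \textbf{main obstacle}. To keep it clean I would instead run the interpolation in the equivalent ``Kruskal'' form: order all tree lines by decreasing $w$, and check that the recursion reproduces each forest exactly once in each sector.

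\textbf{Positivity.} Fix $w_{\cal F}$ and order the distinct values $1=u_0>u_1>\dots>u_m>u_{m+1}=0$. For $u_k\ge t>u_{k+1}$, let $\Pi_t$ be the partition of $\{1,\dots,n\}$ into the connected components of the subforest of lines with $w_\ell\ge t$. Then
\[
X^{\cal F}_{ij}=\int_0^1 dt\,\mathbf 1[i\sim_{\Pi_t} j].
\]
Indeed, $i,j$ lie in the same component of $\{\ell: w_\ell\ge t\}$ exactly when every line on their unique path has $w_\ell\ge t$, that is, when $t\le\min_{P_{i\to j}}w$. On the diagonal the indicator is $1$, matching the completion by $1$.

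Each matrix $\mathbf 1[i\sim_\Pi j]$ is block diagonal with blocks of all-ones matrices $\mathbf e\mathbf e^T$, hence positive semidefinite. So $X^{\cal F}(w_{\cal F})$ is a positive combination of positive matrices, and is therefore positive.
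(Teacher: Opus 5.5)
\textbf{Verdict: there is a genuine gap.} Your positivity argument is correct; it is the standard decomposition into partition matrices. The gap is in the derivation of the forest formula itself: the ``key identity'' you plan to establish is false for the interpolation you actually set up.

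\textbf{Why the key identity fails.} Your vertex-growing recursion starts the cluster at vertex $1$ and multiplies every line leaving the current cluster by a fresh parameter. This produces a rooted, ordered-tree formula of Battle--Federbush type. That formula distributes the weight among trees differently from the symmetric min-formula, and the discrepancy is visible already for $n=3$. The tree $\{(1,2),(2,3)\}$ has only one admissible growth order from the root $1$, and it contributes
\[
\int_0^1\!\!\int_0^1 dw_1\,dw_2\;(\partial_{x_{12}}\partial_{x_{23}}f)\big(x_{12}=w_1,\;x_{13}=w_1w_2,\;x_{23}=w_2\big).
\]
The statement instead requires $\int_0^1\!\int_0^1 dw_{12}\,dw_{23}\,(\partial_{x_{12}}\partial_{x_{23}}f)(w_{12},\min(w_{12},w_{23}),w_{23})$. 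For $f=x_{12}x_{13}x_{23}$ these equal $1/4$ and $1/3$ respectively.

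The star $\{(1,2),(1,3)\}$ has two growth orders. Each carries a chain-rule factor $w_1$, which your bookkeeping does not mention, and together they give $1/2$ instead of $1/3$. Only the sum over all forests agrees, both sides giving $f(1,1,1)=1$.

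A change of variables does not change the value of a single integral. So no reparametrisation of the product parameters can turn your tree-by-tree terms into the required ones. Already the laws of $w_1w_2$ and of $\min(a,b)$ differ, with densities $-\log u$ and $2(1-u)$. Your expansion also singles out vertex $1$, while the target is symmetric under relabelling. Passing from your identity to the stated one would need a resummation across different trees, which the proposal does not provide.

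\textbf{How to fix it.} The ``Kruskal form'' you mention at the end is the right route. It is not equivalent to the vertex-growing recursion, though: it requires a different interpolation, and that interpolation is where the proof actually lives.
\begin{itemize}
\item Start from an ordered sequence of lines $\ell_1,\dots,\ell_k$ forming a forest, with parameters $1\ge w_1\ge\dots\ge w_k$.
\item Freeze each pair $(i,j)$ already connected by $\{\ell_1,\dots,\ell_k\}$ at the minimum of the $w$'s on its path. Set every pair not yet connected to a common value $t$.
\item Apply the fundamental theorem of calculus in $t\in[0,w_k]$. The derivative creates a line $\ell_{k+1}$ joining two distinct components, with $w_{k+1}\le w_k$.
\item At $t=w_k$ the evaluation point coincides with the one at which $\partial_{x_{\ell_k}}$ was produced. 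This holds because pairs newly connected by $\ell_k$ have path minimum exactly $w_k$.
\item The recursion stops after at most $n-1$ lines. It yields a sum over ordered forests integrated over the simplex $1\ge w_1\ge\dots\ge w_k\ge 0$, evaluated at $X^{\mathcal F}(w)$, which depends only on the unordered forest.
\item Every ordering of the lines of a forest is admissible. Summing over the $k!$ orderings therefore turns the simplices into $\prod_{\ell\in\mathcal F}\int_0^1 dw_\ell$.
\end{itemize}
For comparison, the paper itself gives no proof of this lemma and simply refers to Brydges--Kennedy and Abdesselam--Rivasseau.
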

\proof See \cite{BK,AR1}.
\qed

\medskip
For readers who want to look further into the BKAR formula and \emph{oriented forests, ordered or not}, see \cite{RiZh1,RiTa}.
\section{Appendix B}\label{sec6}

We recall the Nevanlinna-Sokal theorem \cite{Nev,Sokal} \footnote{
For Borel-LeRoy modifications, see \cite{CaGrMa}.}.
Here we follow the notations of \cite{Gur}, Appendix C:

\begin{figure}[!ht]
\begin{center}\begin{tikzpicture}[scale=1.2]
\fill[color=red!40] (1,0) circle (1) ;
\draw[->] (-1,0) -- (2.5,0);
\draw [->] (0,-1.5) -- (0,1.5);
\draw[red] (1,0) circle (1) ;
\draw[red] (1.1,.7) node[below]{$^{R}$} ;
\draw[red] (1.5,-1.5) node[below]{$\cD_R$} ;
\draw[red,<->] (1,0) -- (1.7071,0.7071);
\draw[red] (1,0) circle (1) ;
\fill[color=red!40] (5,1) arc (90:270:1) -- (9,-1) -- (9,1) -- cycle ;
\draw[->] (3.5,0) -- (9.5,0);
\draw [->] (5,-1.5) -- (5,1.5);
\draw[red] (5,1) arc (90:270:1) -- (9,-1) -- (9,1) -- cycle ;
\draw[red,<->] (5,0) -- (4.2929,0.7071);
\draw[red] (4.4,0.5) node[below]{$^{\sigma^{-1}}$} ;
\draw[red] (6,-1.5) node[below]{$\Sigma_\sigma$} ;
\end{tikzpicture}
\end{center}
\caption{Domain of analyticity of $F$ and of its Borel transform.}
\label{CuX}
\end{figure}

\begin{theorem}
\label{BorLeRSum}
A function $f(\lambda, {\cal N})$ with $\lambda \in \mathbb{C}$ and ${\cal N} \in [0,1]^{2\fm}$
is said to be Borel summable in $\lambda$ uniformly in ${\cal N}$ if:

\begin{itemize}
\item $f(\lambda, {\cal N})$ is analytic in a disk \,
$\Re (\lambda^{-1}) > (2R)^{-1}$ with $R\in \mathbb{R}_+$ independent of ${\cal N}$.
\item $f(\lambda, {\cal N})$ admits a Taylor expansion at the origin with uniform bound on the Taylor remainder:
\bea
f(\lambda, {\cal N})= \sum_{k=0}^{r-1} f_{{\cal N},k} \lambda^k + R_{{\cal N},r}(\lambda),\quad\vert R_{{\cal N},r}(\lambda)\vert \le K \sigma^r r! \vert
\lambda \vert^r ,
\ea
for some constants $K$ and $\sigma$ independent of ${\cal N}$.
\end{itemize}
If $f(\lambda, {\cal N})$ is Borel summable in $\lambda$ uniformly in ${\cal N}$ then:
\bea B(t,{\cal N}) &=&  \sum_{k=0}^{\infty} \frac{1}{k!}\:  f_{{\cal N},k}\, t^k ,
\ea
is an analytic function for $\vert t \vert < \sigma^{-1}$ that admits an analytic continuation in the strip
$\{z |\,  \vert  \Im z \vert < \sigma^{-1}\}$such that $\vert B(t,{\cal N}) \vert \le B\,e^{t/R}$ for some constant $B$
independent of ${\cal N}$ and $f(\lambda, {\cal N})$ is given by the absolutely convergent integral:
\bea
f(\lambda, {\cal N}) = \frac{1}{\lambda} \int_0^\infty dt B(t,{\cal N}) e^{- \frac{t}{\lambda}}\, .
\ea
\end{theorem}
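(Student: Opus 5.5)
The statement is the Nevanlinna--Sokal theorem with a parameter ${\cal N}$. I would follow Sokal's original argument \cite{Sokal} and track the constants: every constant produced will depend only on $K$, $\sigma$ and $R$, never on ${\cal N}$, and this gives the uniformity. Throughout I use the variable $w=\lambda^{-1}$. In this variable the disk $\cD_R=\{\Re(\lambda^{-1})>(2R)^{-1}\}$ becomes the half-plane $\Re w>(2R)^{-1}$, and $f$ is analytic there.

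\emph{Step 1: the Borel transform converges near $0$.} Subtract the Taylor formula at orders $r$ and $r+1$. This gives $f_{{\cal N},r}\lambda^r=R_{{\cal N},r}(\lambda)-R_{{\cal N},r+1}(\lambda)$. Dividing by $\lambda^r$ and letting $\lambda\to0$ yields the uniform bound $|f_{{\cal N},r}|\le K\sigma^r r!$. Hence $B(t,{\cal N})=\sum_k f_{{\cal N},k}t^k/k!$ converges for $|t|<\sigma^{-1}$, uniformly in ${\cal N}$.

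\emph{Step 2: an integral representation of $B$.} Fix $r'$ with $(2R)^{-1}<r'^{-1}$, i.e. $r'<2R$, and set
$$\tilde B(t,{\cal N})=\frac{1}{2\pi i}\int_{\Re w=1/r'} f(1/w,{\cal N})\, e^{tw}\,\frac{dw}{w}\qquad (t>0).$$
I would check the following. Replace $f$ by its Taylor polynomial plus the remainder $R_{{\cal N},n}(1/w)$. The monomial $\lambda^k=w^{-k}$ contributes exactly $t^k/k!$, by the Hankel-type formula $\frac{1}{2\pi i}\int w^{-k-1}e^{tw}dw=t^k/k!$. The remainder contributes an integral bounded by $K\sigma^n n!\int|w|^{-n-1}e^{t\Re w}|dw|$. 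For $|t|\sigma$ small, and choosing $r'$ adapted to $n$, this tends to $0$ as $n\to\infty$. Hence $\tilde B=B$ near the origin. Moving the contour to $\Re w=1/r'$ with $r'\uparrow 2R$, and using the boundedness of $f$ on the contour, gives the exponential growth bound $|B(t,{\cal N})|\le B\,e^{t/R}$ with a uniform constant (possibly after shrinking $R$ slightly, as the statement allows).

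\emph{Step 3: continuation to the strip, which I expect to be the main obstacle.} Differentiating under the integral gives $\partial_t^n\tilde B=\frac{1}{2\pi i}\int w^{n}f(1/w)e^{tw}\,dw/w$ at every real $t_0>0$. Inserting the Taylor remainder of the appropriate order and integrating by parts in the same way as in Step 2, I aim for a bound $|\partial_t^n B(t_0,{\cal N})|\le K'\sigma^n n!\,e^{t_0/R}$ with $K'$ independent of ${\cal N}$ and $t_0$. The Taylor series of $B$ at $t_0$ then has radius at least $\sigma^{-1}$. The union of these disks over $t_0\ge0$ covers the strip $|\Im z|<\sigma^{-1}$ near the positive axis, so this is the analytic continuation, with uniform bounds. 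Getting the $n!$ growth rather than something worse requires choosing the contour parameter depending on $n$, and this is the delicate estimate.

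\emph{Step 4: inversion.} For $\Re\lambda^{-1}>R^{-1}$, substitute the representation of $B$ into $\frac1\lambda\int_0^\infty B(t)e^{-t/\lambda}dt$. The integral converges absolutely by the bound of Step 2. Exchanging the integrals by Fubini and using $\int_0^\infty e^{t(w-\lambda^{-1})}dt=(\lambda^{-1}-w)^{-1}$, the Cauchy formula on the half-plane gives back $f(\lambda,{\cal N})$.
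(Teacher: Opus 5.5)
The paper does not prove this statement. It simply recalls the Nevanlinna--Sokal theorem from \cite{Nev,Sokal} in the notation of \cite{Gur}. Your outline is essentially Sokal's own argument, and your tracking of uniformity in ${\cal N}$ (all constants depend only on $K,\sigma,R$) is right. Steps 1, 2 and 4 are sound after one technical repair. For $f$ merely bounded, $\frac{1}{2\pi i}\int_{\Re w=a} f(1/w)\,e^{tw}\,dw/w$ is not absolutely convergent, since the integrand is only $O(|w|^{-1})$. Set instead
\[
B(t)=f_{{\cal N},0}+\frac{1}{2\pi i}\int_{\Re w=a}R_{{\cal N},1}(1/w)\,e^{tw}\,\frac{dw}{w},\qquad a>(2R)^{-1},
\]
whose integrand is $O(|w|^{-2})$ because $|R_{{\cal N},1}(1/w)|\le K\sigma/|w|$. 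This version does three things:
\begin{itemize}
\item it legitimizes your contour shifts;
\item it gives $|B(t)|\le K\bigl(1+\frac{\sigma}{2a}\bigr)e^{ta}$ for $t\ge0$; taking $a=R^{-1}$ needs no shrinking of $R$, since that line already lies in the half-plane $\Re w>(2R)^{-1}$;
\item it justifies Fubini in Step 4, which then gives the Laplace representation on the whole disk $\Re\lambda^{-1}>(2R)^{-1}$ if you choose $(2R)^{-1}<a<\Re\lambda^{-1}$.
\end{itemize}

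The genuine gap is Step 3, the only nontrivial part of the theorem. You leave it as an aim, and the mechanism you propose points the wrong way. Your formula $\partial_t^n\tilde B=\frac{1}{2\pi i}\int w^{n}f(1/w)e^{tw}\,dw/w$ does not converge for $n\ge1$. An $n$-dependent contour $\Re w=a_n\to\infty$ is the right trick in Step 2 at fixed small $t$, but here it would bring a factor $e^{t_0a_n}$, exponential in $n$, which shrinks the radius of convergence at $t_0$. Instead, expand to order $n+1$ before differentiating. By Step 2,
\[
B(t)=\sum_{k=0}^{n} f_{{\cal N},k}\frac{t^k}{k!}+\frac{1}{2\pi i}\int_{\Re w=a}R_{{\cal N},n+1}(1/w)\,e^{tw}\,\frac{dw}{w}.
\]
Since $|w^{n-1}R_{{\cal N},n+1}(1/w)|\le K\sigma^{n+1}(n+1)!\,|w|^{-2}$, you may differentiate $n$ times under the integral with $a=R^{-1}$ fixed. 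The polynomial part contributes exactly $f_{{\cal N},n}$, and you get
\[
|B^{(n)}(t_0)|\le K\sigma^n n!\Bigl(1+\frac{\sigma R\,(n+1)}{2}\,e^{t_0/R}\Bigr).
\]
The extra factor $n+1$ is only polynomial, so the Taylor series at each $t_0\ge0$ still has radius at least $\sigma^{-1}$. Summing it gives $|B(t)|\le C_{\sigma'}e^{|t|/R}$ uniformly on $\{\mathrm{dist}(t,\R_+)<1/\sigma'\}$ for every $\sigma'>\sigma$. That region, $\Sigma_\sigma$ of Figure~\ref{CuX}, with bounds uniform only on the slightly smaller $\Sigma_{\sigma'}$, is what Nevanlinna--Sokal actually gives. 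Do not claim the full strip $|\Im z|<\sigma^{-1}$ as literally written in the statement, which is false in general. For example, $B(t)=(t+c)^{-1}$ comes from an $f$ satisfying the hypotheses with $\sigma=c^{-1}$, yet it has a pole at $-c$.
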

In other words, the Taylor expansion of $f(\lambda, {\cal N})$ at the origin is Borel summable, and $f(\lambda, {\cal N})$ is its Borel sum.

\end{document}